\title{Near-Optimal UGC-hardness of Approximating {\sc Max $k$-CSP$_R$}}
\author{Pasin Manurangsi}
\author{Preetum Nakkiran}
\author{Luca Trevisan}
\affil{University of California, Berkeley \authorcr
		  \texttt{\small \{pasin,preetum,luca\}@berkeley.edu}}
\date{}
\begin{document}
\maketitle

\newcommand{\fk}{f^{\leq d}}
\newcommand{\Fk}{F^{\leq d}}
\newcommand{\Gk}{G^{\leq d}}
\newcommand{\gk}{g^{\leq d}}
\renewcommand{\t}{\widetilde}
\newcommand{\pcorr}{\overset{\rho}{\gets}}
\newcommand{\analog}[1]{\{#1\}}

\renewcommand{\i}[2]{{#1}^{(#2)}}

\abstract{In this paper, we prove an almost-optimal hardness for {\sc Max
    $k$-CSP$_R$} based on Khot's Unique Games Conjecture (UGC). In Max
    $k$-CSP$_R$, we are given a set of predicates each of which depends on
    exactly $k$ variables.
    Each variable can take any value from $1, 2, \dots, R$.
    The goal is to find an assignment to variables that maximizes the number of satisfied predicates.

Assuming the Unique Games Conjecture, we show that it is NP-hard to approximate {\sc Max $k$-CSP$_R$} to within factor $2^{O(k \log k)}(\log R)^{k/2}/R^{k - 1}$ for any $k, R$.
To the best of our knowledge, this result improves on all the known hardness of approximation results when
$3 \leq k = o(\log R/\log \log R)$.
In this case, the previous best hardness result was
NP-hardness of approximating within a factor $O(k/R^{k-2})$ by Chan.
When $k = 2$, our result matches the best known UGC-hardness result of Khot, Kindler, Mossel and O'Donnell.

In addition, by extending an algorithm for {\sc Max 2-CSP$_R$} by Kindler, Kolla and Trevisan, we provide an $\Omega(\log R/R^{k - 1})$-approximation algorithm for {\sc Max $k$-CSP$_R$}. This algorithm implies that our inapproximability result is tight up to a factor of $2^{O(k \log k)}(\log R)^{k/2 - 1}$. In comparison, when $3 \leq k$ is a constant, the previously known gap was $O(R)$, which is significantly larger than our gap of $O(\polylog R)$.

Finally, we show that we can replace the Unique Games Conjecture assumption with Khot's $d$-to-1 Conjecture and still get asymptotically the same hardness of approximation.
}

\newpage


\newpage

\setlength{\parindent}{0pt}
\setlength{\parskip}{6pt plus 2pt minus 1pt}

\section{Introduction}

Maximum Constraint Satisfaction Problem ({\sc Max CSP}) is an optimization
problem where the inputs are a set of variables, an alphabet set, and a
set of predicates. Each variable can be assigned any alphabet from the
alphabet set and each predicate depends only on the assignment to a subset of variables. The goal is to find an assignment to the variables that maximizes the number of satisfied predicates.

Many natural optimization problems, such as {\sc Max Cut}, {\sc Max $k$-CUT} and {\sc Max $k$-SAT}, can be formulated as {\sc Max CSP}. In addition, {\sc Max CSP} has been shown to help approximate other seemingly-unrelated problems such as {\sc Densest $k$-Subgraph}~\cite{CHK11}. Due to this, {\sc Max CSP} has long been researched by the approximation algorithm community~\cite{Tre98,Hast05,CMM09,MM14,KKT15,GM15}. Furthermore, its relation to PCPs ensures that {\sc Max CSP} is also well-studied on the hardness of approximation side~\cite{ST00,Eng05,ST06,KKMO07,AM08,GR08,EH08,Chan13}.

The main focus of this paper is on {\sc {Max $k$-CSP}$_R$}, a family of {\sc Max
CSP} where the alphabet set is of size $R$ and each predicate depends on only
$k$ variables. On the hardness of approximation side, most early works focused
on boolean {\sc {Max $k$-CSP}}. Samorodnitsky and Trevisan first showed that
{\sc {Max $k$-CSP}$_2$} is NP-hard to approximate to within factor
$2^{O(\sqrt{k})}/2^{k}$~\cite{ST00}. Engebretsen and Holmerin later improved
constant factors in the exponent $O(\sqrt{k})$ but still yielded hardness of a
factor $2^{O(\sqrt{k})}/2^{k}$~\cite{EH08}. To break this barrier, Samorodnitsky
and Trevisan proved a hardness of approximation conditioned on Khot's Unique
Games Conjecture (UGC), which will be discussed in more detail later; they
achieved a ratio of $O(k/2^k)$ hardness, which is tight up to a constant for the boolean case~\cite{ST06}. Chan later showed that NP-hardness of approximation at this ratio can be achieved unconditionally and, thus, settled down the
approximability of {\sc {Max $k$-CSP}$_2$}~\cite{Chan13}.

Unlike the boolean case, the approximability of {\sc {Max $k$-CSP}$_R$} when $R
> 2$ is still not resolved. In this case, Engebretsen showed
$R^{O(\sqrt{k})}/R^k$ NP-hardness of approximation~\cite{Eng05}. Under the
Unique Games Conjecture, a hardness of approximation of $O(kR/R^{k-1})$ factor
was proven by Austrin and Mossel~\cite{AM08} and, independently, by Guruswami
and Raghavendra~\cite{GR08}.
For the case $k=2$, results by Khot et al. \cite{KKMO07} implicitly
demonstrate UGC-hardness of approximation within $O(\log R / R)$,
made explicit in~\cite{KKT15}.
Moreover, Austrin and Mossel proved UGC-hardness of
approximation of $O(k/R^{k-1})$ for infinitely many $k$s~\cite{AM08},
but in the regime $k \geq R$.
Recently, Chan was able to remove the Unique Game Conjecture assumption from these
results~\cite{Chan13}. More specifically, Chan showed NP-hardness of
approximation of factor $O(kR/R^{k-1})$ for every $k, R$ and that of
factor $O(k/R^{k-1})$ for every $k \geq R$. Due to an approximation algorithm with matching approximation ratio by Makarychev and Makarychev~\cite{MM14}, Chan's result established tight hardness of approximation for $k \geq R$. On the other hand, when $k < R$, Chan's result gives $O(kR/R^{k-1})$ hardness of approximation whereas the best known approximation algorithm achieves only $\Omega(k/R^{k - 1})$ approximation ratio~\cite{MM14,GM15}.
In an attempt to bridge this gap, we prove the following theorem.

\begin{theorem}[Main Theorem] \label{thm:main}
Assuming the Unique Games Conjecture,
it is NP-hard to approximate {\sc {Max $k$-CSP}$_R$} to within $2^{O(k \log
k)}(\log R)^{k/2}/R^{k - 1}$ factor,
for any $k \geq 2$ and any sufficiently large $R$.
\end{theorem}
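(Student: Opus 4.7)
The plan is to follow the standard UGC+long-code paradigm. Given a Unique Games instance $\Phi$ on vertex set $V$ with alphabet $\Sigma$, introduce one Max $k$-CSP$_R$ variable for each pair $(v,x)$ with $v\in V$ and $x\in[R]^\Sigma$, interpreted as the value $f_v(x)\in[R]$ of an $R$-ary function. Each $k$-ary predicate is produced by sampling a common neighbor $u$ of $k$ UG vertices $v_1,\dots,v_k$, drawing a correlated $k$-tuple $(x_1,\dots,x_k)$ from a distribution $\mathcal{D}$ on $([R]^\Sigma)^k$, routing each $x_i$ through the UG bijection on the edge $(u,v_i)$, and testing an algebraic predicate $T$ on $(f_{v_1}(x_1),\dots,f_{v_k}(x_k))$.

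The heart of the argument is to design $(\mathcal{D},T)$ as a $k$-query dictatorship test with the right completeness-vs-soundness gap. I would use a correlated-Gaussian bucketing construction: draw $k$ jointly Gaussian vectors in $\mathbb{R}^\Sigma$ with a tunable pairwise correlation $\rho$, quantize each coordinate into $R$ buckets of equal standard-Gaussian mass to obtain $x_i \in [R]^\Sigma$, and let $T$ accept when the labels satisfy a prescribed relation over $\mathbb{Z}_R$. For completeness, if $f_v$ is the dictator encoding a good UG labeling, then $T$ accepts with probability $\Omega_k(1)$ by design of the algebraic predicate. For soundness, apply the usual influence-decoding step: if some $f_v$ has a coordinate of significant low-degree influence, extract a UG labeling and contradict soundness of UG; otherwise, apply Mossel's invariance principle at truncation degree $d\approx \Theta_k(\log R)$ to replace each $R$-ary function by its Gaussian analogue. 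The acceptance probability is then controlled by a multidimensional Gaussian anti-concentration bound on $k$ correlated Gaussians simultaneously falling into prescribed $1/R$-mass buckets. For $k=2$ this recovers the $O(\log R / R)$ bound of Khot--Kindler--Mossel--O'Donnell, and for general $k$ it should scale as $O((\log R)^{k/2}/R^{k-1})$, costing a factor $\sqrt{\log R}/R$ per additional query beyond the first.

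The main obstacle I anticipate is synchronizing three ingredients at compatible parameters: (i) the influence-decoding reduction from UG to dictatorship, (ii) an $R$-ary invariance principle applied up to degree $d\approx\Theta_k(\log R)$, whose hypercontractive truncation error contributes the $2^{O(k\log k)}$ overhead, and (iii) a $k$-dimensional Gaussian anti-concentration estimate yielding the $(\log R)^{k/2}/R^{k-1}$ numerator. Each piece exists in isolation, but tuning the correlation $\rho$, the truncation degree $d$, and the UG alphabet size so that the invariance error stays strictly below the minuscule target soundness---without washing out the $\sqrt{\log R}$ gain per query---is the delicate part. In particular, proving the $k$-dimensional Gaussian anti-concentration bound for $k\ge 3$ and verifying that it survives the perturbation introduced by invariance is the step I expect to require the most care.
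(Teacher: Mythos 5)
Your proposal follows the same high-level UGC plus long-code framework and the same influence-decoding step, but the key lemma and the soundness engine are genuinely different from the paper, and you have left a real gap at exactly the point you flag as needing the most care.

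The paper's test is the simplest imaginable: sample $z \in [R]^n$ uniformly, take $k$ independent $\rho$-noisy copies $x^{(1)},\dots,x^{(k)}$ of $z$ with $\rho \approx 1/\sqrt{(k-1)\log R}$, and accept iff all $k$ long-code values agree after routing through the UG permutations. Its completeness on a dictator is $\rho^k = (\log R)^{-k/2} 2^{-O(k\log k)}$, which is far from the $\Omega_k(1)$ you claim; the entire $(\log R)^{k/2}$ in the final ratio comes from the completeness side, not from the soundness side. Your ``algebraic predicate $T$ over $\mathbb{Z}_R$'' giving $\Omega_k(1)$ completeness is unspecified, and if it existed the budget would instead have to show up as a $(\log R)^{k/2}/R^{k-1}$ bound on soundness, which is not the arithmetic the paper does. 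More importantly, the paper's soundness analysis deliberately \emph{avoids} Gaussian space: the acceptance probability is written as $\sum_{i \in [R]} \E_z[(T_\rho g_v^i(z))^k]$, and the Main Lemma bounds $\E[(T_\rho g)^k]$ for a balanced low-influence $g$ by first applying the invariance principle to pass from $[R]^n$ to the \emph{Boolean} domain $\{\pm 1\}^{nR}$, then invoking Boolean hypercontractivity ($\|T_{2\rho} h\|_k \le \|h\|_{1+\epsilon}$ for $\rho = 1/\sqrt{(k-1)\log R}$, $\epsilon = 4/\log R$), and finally passing back via invariance to bound the $(1+\epsilon)$-norm by the $1$-norm. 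The authors explicitly note this is how they sidestep a Majority-is-Stablest-style Gaussian argument.

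Your route requires a $k$-wise Gaussian anti-concentration bound: $k$ jointly $\rho$-correlated Gaussian vectors simultaneously landing in prescribed $1/R$-mass buckets. For $k=2$ this is standard (Gaussian noise stability as in KKMO), but for $k \ge 3$ no off-the-shelf statement gives you $(\log R)^{k/2}/R^{k-1}$ with the precision you need, and you say yourself this is the delicate step. That is exactly the obstacle the paper's hypercontractivity route is designed to avoid: Boolean hypercontractivity gives the $k$-norm to $(1+\epsilon)$-norm control cleanly, with all the quantitative slack absorbed into the $2^{O(k)}$ and the $\epsilon = O(1/\log R)$ loss, without needing a new multivariate Gaussian estimate. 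So while the outer structure of your proposal is sound, as written it does not constitute a proof: the completeness predicate is unspecified and its claimed constant completeness conflicts with the paper's arithmetic, and the core soundness estimate is posited rather than established.
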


\begin{figure}[h]
\centering
\begin{tabular}{c | c | c | c | c}
Range of $k, R$ & NP-Hardness & UGC-Hardness & Approximation & References \\
\hline
$k = 2$ & $O\left(\frac{\log R}{\sqrt{R}}\right)$ & $O\left(\frac{\log R}{R}\right)$ & $\Omega\left(\frac{\log R}{R}\right)$ & \cite{Chan13,KKMO07,KKT15} \\
$3 \leq k < R$ & $O\left(\frac{k}{R^{k-2}}\right)$ & -- & $\Omega\left(\frac{k}{R^{k-1}}\right)$ & \cite{Chan13,MM14,GM15} \\
$R \leq k$ & $O\left(\frac{k}{R^{k-1}}\right)$ & -- & $\Omega\left(\frac{k}{R^{k-1}}\right)$ & \cite{Chan13,MM14} \\
\hline
Any $k, R$ & -- & $\frac{2^{O(k \log k)}(\log R)^{k/2}}{R^{k - 1}}$  & $\Omega\left(\frac{\log R}{R^{k - 1}}\right)$ & this work
\end{tabular}
\caption{Comparison between our work and previous works. We list the previous
best known results alongside our results. From  previous works, there is either
an NP-hardness or a UGC-hardness result matching the best known approximation algorithm in every case except when $3 \leq k < R$.
Our hardness result improves on the best known hardness result when $k = o(\log
R/\log \log R)$,
and our approximation algorithm improves on the previously known algorithm when $k = o(\log R)$.}
\end{figure}

When $k = o(\log R/ \log \log R)$, our result
improves upon the previous best known hardness of approximation result in this regime,
due to Chan.
In particular, when $k$ is constant, our results are tight up to a factor of
$O(\text{polylog } R)$.  While Chan's results hold unconditionally, our result, similar
to many of the aforementioned results (e.g. \cite{ST06,AM08,GR08}), rely on the
Unique Games Conjecture.

A {\em unique game} is a {\sc Max 2-CSP} instance where each constraint is a
permutation. The {\em Unique Games Conjecture (UGC)}, first proposed by Khot in
his seminal paper in 2002~\cite{Khot02}, states that, for any sufficiently small
$\eta, \gamma > 0$, it is NP-hard to distinguish a unique game where at least $1
- \eta$ fraction of constraints can be satisfied from a unique game where at
most $\gamma$ fraction of constraints can be satisfied. The UGC has since made a
huge impact in hardness of approximation; numerous hardness of approximation
results not known unconditionally can be derived assuming the UGC. More
surprisingly, UGC-hardness of approximation for various problems, such as {\sc
Max Cut}~\cite{KKMO07}, {\sc Vertex Cover}~\cite{KR08} and {\em any} {\sc Max
CSP}~\cite{Rag08}\footnote{Raghavendra showed in~\cite{Rag08} that it is hard to
approximate any {\sc Max CSP} beyond what a certain type of semidefinite program
can achieve. However, determining the approximation ratio of a semidefinite
program is still not an easy task. Typically, one still needs to find an
integrality gap for such a program in order to establish the approximation ratio.}, are known to be tight. For more details on UGC and its implications, we refer interested readers to Khot's survey~\cite{Khot10} on the topic.

Another related conjecture from~\cite{Khot02} is the {\em $d$-to-1 Conjecture}.
In the $d$-to-1 Conjecture, we consider {\em $d$-to-1 games} instead of unique
games. A $d$-to-1 game is an instance of {\sc Max 2-CSP} where the constraint
graph is bipartite.
Moreover, each constraint must be a $d$-to-1 function, i.e., for each assignment
to a variable on the right, there exists $d$ assignments to the corresponding variable on the
left that satisfy the constraint. The $d$-to-1 Conjecture states that, for any
sufficiently small $\gamma > 0$, it is NP-hard to distinguish between a fully
satisfiable $d$-to-1 game and a $d$-to-1 game where at most $\gamma$ fraction of constraints can be satisfied. Currently, it is unknown whether the $d$-to-1 Conjecture implies the Unique Games Conjecture and vice versa.

While the $d$-to-1 Conjecture has yet to enjoy the same amount of influence as
the UGC, it has been proven successful in providing a basis for hardness of graph coloring problems ~\cite{DMR09,DS10,GS11} and for {\sc Max $3$-CSP} with perfect completeness~\cite{OW09,Tan09}. Here we show that, by assuming the $d$-to-1 Conjecture instead of UGC, we can get a similar hardness of approximation result for {\sc Max $k$-CSP$_R$} as stated below.

\begin{theorem} \label{thm:d-to-1}
Assuming the $d$-to-1 Games Conjecture holds for some $d$,
it is NP-hard to approximate {\sc {Max $k$-CSP}$_R$} to within $2^{O(k \log
k)}(\log R)^{k/2}/R^{k - 1}$ factor,
for any $k \geq 2$ and any sufficiently large $R$.
\end{theorem}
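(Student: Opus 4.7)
The plan is to reuse the reduction template underlying Theorem~\ref{thm:main} almost verbatim, starting from a $d$-to-1 game rather than a Unique Game. Concretely, the outer PCP is replaced by an instance $(U\cup V, E, \Sigma_U, \Sigma_V, \{\pi_e\}_{e\in E})$ of the $d$-to-1 game promised by the $d$-to-1 Conjecture, where each $\pi_e\colon\Sigma_U\to\Sigma_V$ is $d$-to-1 and is either fully satisfiable or at most $\gamma$-satisfiable. To each vertex I attach the same $[R]^n$-valued table used in the proof of Theorem~\ref{thm:main}, and for each edge $e=(u,v)$ I fold the table on the $U$-side over the fibers of $\pi_e$ (as in the $d$-to-1 reductions of \cite{DMR09,DS10}). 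The CSP constraints are the same $k$-ary noise-stability tests as in the UGC reduction, but every query that lands on the $U$-side is first pulled back through $\pi_e$. With this replacement, the alphabet, the arity, and the query distribution on $[R]^n$ all remain unchanged, so the numerical parameters in the statement of Theorem~\ref{thm:main} transfer directly.

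For completeness I would argue exactly as in the UGC case: given a labeling that satisfies $1-\eta$ fraction of the $d$-to-1 constraints, set each table to the dictator indicated by its label. Since the $U$-side folding is consistent with $\pi_e$, an honest $\pi_e$-preserving pair of dictators behaves on the folded tables precisely as a matching pair of dictators behaves in the UGC reduction, so the completeness value of the CSP agrees with the value achieved under the UGC reduction up to $O(\eta)$.

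For soundness I would follow the usual pattern: if the CSP assignment has value exceeding the target ratio $2^{O(k\log k)}(\log R)^{k/2}/R^{k-1}$, then on a noticeable fraction of edges the tables on $u$ and $v$ share a ``low-degree influential'' coordinate, by the same invariance/hypercontractive argument used in the proof of Theorem~\ref{thm:main} (the folding only restricts the admissible functions, so every bound on low-degree influences still applies). In the UGC setting, the permutation $\pi_e$ then matches these two coordinates uniquely; here $\pi_e$ is $d$-to-1, so each influential coordinate on the $V$-side lifts to a set of at most $d$ candidates on the $U$-side. Decoding each vertex by picking a uniformly random coordinate from the influential set yields a labeling of the $d$-to-1 game whose expected value is only a factor $d$ worse than in the UGC analysis, contradicting the $d$-to-1 Conjecture once $\gamma$ is taken small enough. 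Because $d$ is a fixed constant, this loss is absorbed into the $2^{O(k\log k)}$ slack already present in Theorem~\ref{thm:main}.

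The main obstacle I expect is verifying that the noise operator and the low-degree-influence decoding used in Theorem~\ref{thm:main} still behave correctly after folding over $\pi_e$: one must check that folded Fourier/Efron–Stein tails are bounded by the unfolded ones, that the hypercontractive inequality on $[R]^n$ is unchanged by identifying coordinates within a fiber, and that the random-fiber decoding transfers the contradiction back to the $d$-to-1 game without an additional $d^{\Theta(k)}$ loss. Once these three points are in place, the rest of the argument is a direct transcription of the UGC-based proof.
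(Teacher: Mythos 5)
Your proposal takes a genuinely different route from the paper. The paper does \emph{not} re-derive a fresh PCP from $d$-to-1 games with folding over fibers; instead, it introduces an intermediary \emph{One-Sided Unique Games Conjecture} (Conjecture~\ref{conj:OSUGC}), in which the completeness parameter is a fixed constant $\zeta<1$ rather than $1-\eta$ for arbitrarily small $\eta$. It then observes two things: (i) the $d$-to-1 Conjecture implies the One-Sided UGC with $\zeta = 1/d$, by a simple syntactic reduction that ``spreads'' each $d$-to-1 map into a permutation $[N]\to[N]$ and lets each $W$-vertex keep a best-choice label (Appendix~\ref{sec:dto1osugc}); and (ii) the UGC-to-CSP PCP of Section~\ref{subsec:red-ug} already works starting from a game whose value is only some fixed constant $\zeta$, because its completeness is $\zeta^k\rho^k = 1/((\log R)^{k/2}2^{O(k\log k)})$ anyway and losing the $\zeta^k$ factor is absorbed into the $2^{O(k\log k)}$ slack. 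Thus Theorems~\ref{thm:main} and~\ref{thm:d-to-1} are both corollaries of Theorem~\ref{thm:osugmain}, and no new inner-verifier analysis is needed at all. The buy of the paper's route is that all of your ``main obstacles'' simply never arise.

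Beyond being different, your route as sketched has a structural gap. The paper's PCP keeps tables \emph{only} on the $W$-side; the verifier picks $v\in V$, picks neighbors $w_1,\dots,w_k\in W$, and queries $h_{w_i}(x^{(i)}\circ\pi_{w_i,v})$ where $\pi_{w_i,v}=\pi_{v,w_i}^{-1}$ is a permutation. In the $d$-to-1 setting $\pi_{v,w}:[N]\to[N/d]$ has no inverse, so this query pattern is literally ill-defined; you would need either tables on both sides (as in Dinur--Mossel--Regev-style $d$-to-1 reductions) or a blocking/lifting of coordinates, and then the soundness argument must be rebuilt from scratch: the products $\E[g^i_v(x^{(1)})\cdots g^i_v(x^{(k)})]$ are no longer $k$-th powers of a single noisy function, so Lemma~\ref{mainlemma} does not apply directly, and the decoding step must handle the asymmetry between the two sides. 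You flag ``verify the noise operator and influence decoding after folding'' as an expected obstacle, but that understates what is missing: the very shape of the inner verifier has to change, and the Fourier/hypercontractivity argument has to be redone for correlated functions on different domains. None of that work is carried out. (Also, a smaller point: the $d$-to-1 Conjecture asserts \emph{perfect} completeness, so there is no ``$1-\eta$ fraction'' in the completeness case.)
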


As mentioned earlier, there has also been a long line of works in approximation
algorithms for {\sc Max $k$-CSP$_R$}. In the boolean case, Trevisan first showed
a polynomial-time approximation algorithm with approximation ratio
$2/2^k$~\cite{Tre98}. Hast later improved the ratio to $\Omega(k/(2^k \log
k))$~\cite{Hast05}. Charikar, Makarychev and Makarychev then came up with an
$\Omega(k/2^k)$-approximation algorithm~\cite{CMM09}. As stated when discussing hardness of approximation of {\sc Max $k$-CSP$_2$}, this approximation ratio is tight up to a constant factor.

For the non-boolean case, Charikar, Makarychev, and Makarychev's algorithm
achieve $\Omega(k\log R/R^k)$ ratio for {\sc Max $k$-CSP$_R$}. Makarychev, and
Makarychev later improved the approximation ratio to $\Omega(k/R^{k - 1})$ when
$k = \Omega(\log R)$~\cite{MM14}. Additionally, the algorithm was extended by
Goldshlager and Moshkovitz to achieve the same approximation ratio for any $k, R$~\cite{GM15}. On this front, we show the following result.

\begin{theorem} \label{thm:approx}
There exists a polynomial-time $\Omega(\log R/R^{k - 1})$-approximation algorithm for {\sc Max $k$-CSP$_R$}.
\end{theorem}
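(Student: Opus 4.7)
The plan is to extend the Kindler-Kolla-Trevisan (KKT) $\Omega(\log R/R)$-approximation algorithm for Max 2-CSP$_R$ to handle $k$-ary constraints. The natural strategy is a reduction from Max $k$-CSP$_R$ to Max 2-CSP$_{R'}$ with $R' = R^{k-1}$: KKT applied to the reduced instance yields an $\Omega(\log R'/R') = \Omega((k-1)\log R/R^{k-1}) = \Omega(\log R/R^{k-1})$ factor, exactly the target bound.

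\textbf{Step 1 (Reduction).} For each $k$-ary constraint $C$ on variables $(x_1, \dots, x_k)$ in a Max $k$-CSP$_R$ instance $\Phi$, I introduce an auxiliary super-variable $Y_C$ over $[R^{k-1}]$ whose intended value encodes the joint assignment of $(x_2, \dots, x_k)$. The reduction produces a Max 2-CSP$_{R^{k-1}}$ instance $\Psi$ with two families of 2-ary constraints: \emph{satisfaction constraints} on $(x_1, Y_C)$ that hold iff the pair corresponds to a satisfying assignment of $C$, and \emph{consistency constraints} between $Y_C$ and each $x_i$ for $i \geq 2$ that hold iff the $(i-1)$-th coordinate of $Y_C$ equals $x_i$. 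Each original variable $x$ is viewed as a variable over $[R^{k-1}]$ by padding unused coordinates with a fixed default value.

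\textbf{Step 2 (Apply KKT and decode).} Run KKT on $\Psi$ to obtain an integral solution whose value is $\Omega(\log R/R^{k-1}) \cdot \mathrm{OPT}(\Psi)$. Read off an assignment $\phi$ for $\Phi$ directly from the values of the original variables in this solution. With appropriate relative weights between the two constraint families, one shows that (a) $\mathrm{OPT}(\Psi)$ is proportional to $\mathrm{OPT}(\Phi)$ after normalizing by total constraint weight, and (b) the $\Phi$-value of $\phi$ is a constant fraction of the $\Psi$-value of the decoded solution. Combining (a), (b), and the KKT guarantee yields the desired approximation ratio.

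\textbf{Main obstacle.} The trickiest point is that the KKT output need not fully satisfy the consistency constraints, so the decoded assignment $\phi$ might fail a $C$-constraint in $\Phi$ even when the corresponding satisfaction constraint on $(x_1, Y_C)$ is satisfied in $\Psi$. I would address this by weighting the consistency constraints sufficiently more heavily than the satisfaction constraints, forcing any near-optimal $\Psi$-solution to be nearly consistent, and then tracking the loss from the residual violated consistency constraints via a union-bound or averaging argument so that the constant is absorbed into $\Omega(\cdot)$. An alternative route that sidesteps consistency entirely is to generalize KKT's SDP-rounding analysis directly from pairs of SDP vectors to $k$-tuples, yielding the same $\Omega(\log R/R^{k-1})$ bound without an explicit reduction.
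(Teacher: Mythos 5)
Your reduction to Max 2-CSP$_{R^{k-1}}$ with auxiliary super-variables and consistency constraints has a genuine gap, and the paper uses a quite different construction. The problem is that KKT's guarantee is only multiplicative with a tiny ratio: on the reduced instance $\Psi$ it promises value $\geq c\,\frac{\log R}{R^{k-1}}\cdot \mathrm{OPT}(\Psi)$, which is far below $1$. Weighting the consistency constraints heavily therefore does \emph{not} ``force any near-optimal $\Psi$-solution to be nearly consistent,'' because KKT is not producing a near-optimal solution. Concretely, if consistency constraints carry weight $1-\epsilon$, then $\mathrm{OPT}(\Psi)\approx 1$ and KKT is only obligated to deliver value about $c\,\frac{\log R}{R^{k-1}}$; that entire value could sit on consistency constraints (or on satisfaction constraints whose super-variable $Y_C$ disagrees with the actual $x_2,\dots,x_k$), so the decoded assignment $\phi$ could satisfy nothing in $\Phi$. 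If instead the consistency constraints are weighted lightly, you have no handle on consistency at all. So claim (b) in Step 2 --- that the $\Phi$-value of $\phi$ is a constant fraction of the $\Psi$-value of the returned solution --- is not justified, and this is the crux, not a detail to be absorbed into $\Omega(\cdot)$. Your suggested ``alternative route'' of reworking KKT's SDP analysis from pairs to $k$-tuples is a different (white-box) undertaking you have not carried out, and there is no reason a priori that it yields the $\log R$ gain.

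The paper's proof avoids all of this by never enlarging the alphabet. It proves a generic extension lemma: given any algorithm $A$ for Max $k'$-CSP$_R$ with advantage $f(R)$ over random, build a Max $k'$-CSP$_R$ instance by projecting each $k$-ary constraint onto every subset of $k'$ of its variables and every fixing of the remaining $k-k'$ variables (with weight split evenly among the $\binom{k}{k'}R^{k-k'}$ projections). Run $A$ to get $\varphi_A$, then output $\varphi_B$ where each variable independently takes $\varphi_A$'s value with probability $1-\alpha$ and a uniformly random value with probability $\alpha$, for $\alpha=(k-k')/k$. One shows $\mathrm{OPT}$ of the projected instance is at least $R^{-(k-k')}\mathrm{OPT}(\Phi)$, and that the random mixing recovers, in expectation, a $\binom{k}{k'}\alpha^{k-k'}(1-\alpha)^{k'}\geq 2^{-O(\min(k',k-k'))}$ fraction of $\varphi_A$'s value on the projected instance back on the original $k$-ary constraints. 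Taking $k'=2$ and $A=$ KKT with $f(R)=\Omega(R\log R)$ gives the theorem. This stays over $[R]$, needs no consistency gadget, and is entirely black-box in $A$.
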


In comparison to the previous algorithms, our algorithm gives better approximation ratio than all the known algorithms when $k = o(\log R)$. We remark that our algorithm is just a simple extension of Kindler, Kolla and Trevisan's polynomial-time $\Omega(\log R/R)$-approximation algorithm for Max $2$-CSP$_R$~\cite{KKT15}.

\subsection{Summary of Techniques}
Our reduction from Unique Games to {\sc Max $k$-CSP$_R$} follows the reduction of
\cite{KKMO07} for {\sc Max $2$-CSP}s. We construct a $k$-query PCP using
a Unique-Label-Cover ``outer verifier'',
and then design a $k$-query Long Code test as an ``inner verifier''.
For simplicity, let us think of $k$ as a constant. We essentially construct a $k$-query
\emph{Dictator-vs.-Quasirandom} test
for functions $f: [R]^n \to [R]$,
with completeness $\Omega(1/(\log{R})^{k/2})$
and soundness $O(1/R^{k-1})$.
Our test is structurally similar to the $2$-query ``noise stability'' tests of
\cite{KKMO07}:
first we pick a random $z \in [R]^n$, then
we pick $k$ weakly-correlated queries $\i{x}{1}, \dots, \i{x}{k}$
by choosing each $\i{x}{i} \in [R]^n$ as a noisy copy of $z$, i.e.,
each coordinate $(\i{x}{i})_j$ is chosen as $z_j$
with some probability $\rho$ or uniformly at random otherwise.
We accept iff $f(\i{x}{1}) = f(\i{x}{2}) = \dots = f(\i{x}{k})$.
The key technical step is our analysis of the soundness of this test.
We need to show that if $f$ is a balanced function with
small low-degree influences,
then the test passes with probability $O(1/R^{k-1})$.
Intuitively, we would like to say that for high enough noise,
the values $f(\i{x}{i})$ are roughly independent and uniform, so the test
passes with probability around $1/R^{k-1}$.
To formalize this intuition, we utilize the \emph{Invariance Principle} and
\emph{Hypercontractivity}.

More precisely, if we let $f^i(x): [R]^n \to \{0, 1\}$ be the indicator function for $f(x) = i$,
then the test accepts iff
$f^i(\i{x}{1}) = \dots = f^i(\i{x}{k}) = 1$ for some $i \in [R]$.
For each $i \in [R]$, this probability can be written as the expectation of the product:
$\E[f^i(\i{x}{1})f^i(\i{x}{2})\dots f^i(\i{x}{k})]$.
Since $\i{x}{i}$'s are chosen as noisy copies of $z$, this expression is related to the
$k$-th norm of a noisy version of $f^i$.
Thus, our problem is reduced to bounding the $k$-norm of a noisy function
$\t f^i: [R]^n \to [0, 1]$, which has bounded one-norm $\E[\t f^i] = 1/R$ since
$f$ is balanced. To arrive at this bound, we first apply the Invariance Principle,
which essentially states that a low-degree low-influence function on $[R]^n$ behaves
on random input similarly to its ``boolean analog'' over domain
$\{\pm 1\}^{nR}$. Here ``boolean analog'' refers to the function over
$\{\pm 1\}^{nR}$ with matching Fourier coefficients.

Roughly speaking, now that we have transfered to the boolean domain, we are
left to bound the $k$-norm of a noisy function on $\{\pm 1\}^{nR}$ based on its
one-norm. This follows from Hypercontractivity in the boolean setting,
which bounds a higher norm of any noisy function on boolean domain in terms
of a lower norm.

The description above hides several technical complications.
For example, when we pass from a function $[R]^n \to [0, 1]$ to its
``boolean analog'' $\{\pm 1\}^{nR} \to \R$,
the range of the resulting function is no longer bounded to $[0, 1]$.
This, along with the necessary degree truncation,
means we must be careful when bounding norms.
Further, Hypercontractivity only allows us to pass from $k$-norms to
$(1+\epsilon)$-norms for small $\epsilon$, so we cannot use
the known $1$-norm directly.
For details on how we handle these issues, see Section~\ref{sec:inapprox}.
This allows us to prove soundness of our dictator test without
passing through results on Gaussian space,
as was done to prove the ``Majority is Stablest'' conjecture \cite{MOO10}
at the core of the \cite{KKMO07} 2-query dictator test.

To extend our result to work with $d$-to-1 Games Conjecture in place of UGC, we observe that our proof goes through even when we assume a conjecture weaker than the UGC, which we name the {\em One-Sided Unique Games Conjecture}. The only difference between the original UGC and the One-Sided UGC is that the completeness in UGC is allowed to be any constant smaller than one but the completeness is a fixed constant for the One-Sided UGC. The conjecture is formalized as Conjecture~\ref{conj:OSUGC}. We show that the $d$-to-1 Games Conjecture also implies the One-Sided UGC, which means that our inapproximability result for {\sc Max $k$-CSP$_R$} also holds when we change our assumption to $d$-to-1 Games.

Lastly, for our approximation algorithm, we simply extend the Kindler, Kolla and Trevisan's algorithm by first creating an instance of {\sc Max 2-CSP$_R$} from {\sc Max $k$-CSP$_R$} by projecting each constraint to all possible subsets of two variables. We then use their algorithm to approximate the instance. Finally, we set our assignment to be the same as that from KKT algorithm with some probability. Otherwise, we pick the assignment uniformly at random from $[R]$. As we shall show in Section~\ref{sec:approx}, with the right probability, this technique can extend not only the KKT algorithm but any algorithm for {\sc Max $k'$-CSP$_R$} to an algorithm for {\sc Max $k$-CSP$_R$} where $k > k'$ with some loss in the advantage over the naive randomized algorithm.

\subsection{Organization of the Paper}

In Section~\ref{sec:prelim}, we define notations and list background knowledge
that will be used throughout the paper. Next, we prove our hardness of
approximation results, i.e., Theorem~\ref{thm:main} and
Theorem~\ref{thm:d-to-1}, in Section~\ref{sec:inapprox}.
In Section~\ref{sec:approx}, we show how to extend Kindler et al.'s algorithm to
{\sc Max $k$-CSP$_R$} and prove Theorem~\ref{thm:approx}.
We also explicitly present the dictator test that is implicit in our hardness
proof, in Section~\ref{sec:dtest}.
Finally, in Section~\ref{sec:lim}, we discuss interesting open questions and directions for future works.

\section{Preliminaries} \label{sec:prelim}

In this section, we list notations and previous results that will be used to prove our results.

\subsection{{\sc Max $k$-CSP$_R$}}

We start by giving a formal definition of {\sc Max $k$-CSP$_R$}, which is the main focus of our paper.

\begin{definition}[{\sc Max $k$-CSP$_R$}]
An instance $(\mathcal{X}, \mathcal{C})$ of (weighted) {\sc Max $k$-CSP$_R$} consists of
\begin{itemize}
\item A set $\mathcal{X} = \{x_1, \dots, x_n\}$ of variables.
\item A set $\mathcal{C} = \{C_1, \dots, C_m\}$ of constraints. Each constraint $C_i$ is a triple $(W_i, S_i, P_i)$ of a positive weight $W_i > 0$ such that $\sum_{i=1}^m W_i = 1$, a subset of variables $S_i \subseteq \mathcal{X}$,  and a predicate $P_i: [R]^{S_i} \to \{0, 1\}$ that maps each assignment to variables in $S_i$ to $\{0, 1\}$. Here we use $[R]^{S_i}$ to denote the set of all functions from $S_i$ to $[R]$, i.e., $[R]^{S_i} = \{\psi : S_i \to [R]\}$.
\end{itemize}
For each assignment of variables $\varphi: \mathcal{X} \to [R]$, we define its value to be the total weights of the predicates satisfied by this assignment, i.e., $\sum_{i=1}^{m} W_i P_i(\varphi\mid_{S_i})$. The goal is to find an assignment $\varphi: \mathcal{X} \to [R]$ that with maximum value. We sometimes call the optimum the value of $(\mathcal{X}, \mathcal{C})$.
\end{definition}

Note that, while the standard definition of {\sc Max $k$-CSP$_R$} refers to the unweighted version where $W_1 = \cdots = W_m = 1/m$, Crescenzi, Silvestri and Trevisan showed that the approximability of these two cases are essentially the same~\cite{CST96}.\footnote{More specifically, they proved that, if the weighted version is NP-hard to approximate to within ratio $r$, then the unweighted version is also NP-hard to approximate to within $r - o_n(1)$ where $o_n(1)$ represents a function such that $o_n(1) \rightarrow 0$ as $n \rightarrow \infty$.} Hence, it is enough for us to consider just the weight version.

\subsection{Unique Games and $d$-to-1 Conjectures}

In this subsection, we give formal definitions for unique games, $d$-to-1 games and Khot's conjectures about them. First, we give a formal definition of unique games.
\begin{definition}[Unique Game]
A unique game $(V, W, E, N, \{\pi_e\}_{e \in E})$ consists of
\begin{itemize}
\item A bipartite graph $G = (V, W, E)$.
\item Alphabet size $N$.
\item For each edge $e \in E$, a permutation $\pi_e: [N] \to [N]$.
\end{itemize}
For an assignment $\varphi: V \cup W \to [N]$, an edge $e = (v, w)$ is satisfied
if $\pi_e(\varphi(v)) = \varphi(w)$. The goal is to find an assignment that satisfies as many edges as possible. We define the value of an instance to be the fraction of edges satisfied in the optimum solution.
\end{definition}

The Unique Games Conjecture states that it is NP-hard to distinguish an instance of value close one from that of value almost zero. More formally, it can be stated as follows.
\begin{conjecture}[Unique Games Conjecture]
For any sufficiently small $\eta, \gamma$, it is NP-hard to distinguish a unique game of value at least $1 - \eta$ from that of value at most $\gamma$.
\end{conjecture}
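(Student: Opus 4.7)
The statement is Khot's Unique Games Conjecture, open since 2002, so any proof sketch is necessarily speculative; I outline the route suggested by recent progress. The most promising starting point is the 2-to-2 Games Theorem of Khot, Minzer, and Safra, which establishes NP-hardness of distinguishing a 2-to-2 game of value at least $1/2 - \epsilon$ from one of value at most $\epsilon$. Since a unique constraint is a special case of a 2-to-2 constraint, this already gives a qualitatively similar statement, but with the wrong completeness: the argument is stuck at $1/2$ rather than $1 - \eta$.

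My plan would be to begin from a 2-to-2 hard instance at the KMS parameters and try to amplify completeness up to $1 - \eta$ while preserving the permutation structure of constraints. Classical parallel repetition preserves uniqueness but sharpens only soundness, not completeness, so it is of no direct help. Instead, one would hope to design a new inner PCP whose accepting predicate is genuinely a permutation rather than a $2$-to-$1$ relation, then compose it with a bipartite Label Cover outer verifier and apply a folding argument in the style of Khot, Kindler, Mossel, and O'Donnell. The soundness analysis would presumably rest on expansion of some algebraic object taking the place of the Grassmann graph, together with an invariance principle against low-influence functions, ensuring that pseudorandom strategies cannot do much better than the honest assignment.

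The main obstacle is structural. The Grassmann soundness analysis is intrinsically tied to completeness $1/2$: the ``same'' test compares labels on two subspaces of dimension $\ell$ intersecting in dimension $\ell-1$, and the underlying linear relation caps honest completeness at the inverse index of a sub-lattice. Any attempt to pass from 2-to-2 to properly unique constraints seems to demand a new pseudorandom object whose analogous expansion properties remain strong at near-$1$ completeness, or else a derandomized parallel repetition theorem in the vein of Dinur and Meir that preserves unique structure through amplification. Neither ingredient is currently known, and producing one is the essential barrier that has kept the conjecture open for more than two decades; a successful resolution would almost certainly require a genuinely new construction rather than a refinement of the present toolbox.
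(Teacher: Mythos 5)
This item is a \emph{conjecture}, not a theorem: the paper states Khot's Unique Games Conjecture as a hypothesis to be assumed, and offers no proof of it (none exists in the literature). You correctly recognized that the statement is open and gave a speculative roadmap rather than a proof, so there is no gap in the usual sense; but there is also nothing in the paper to compare against. The only thing worth flagging is a matter of scope: the paper does not need the full UGC. It isolates a weaker hypothesis, the One-Sided UGC (Conjecture~\ref{conj:OSUGC}), in which the completeness threshold $\zeta$ is a fixed constant bounded away from $1$ rather than $1-\eta$ for arbitrarily small $\eta$, and shows that both the UGC and the $d$-to-$1$ Conjecture imply it. Your discussion of why amplifying completeness from $1/2$ (the $2$-to-$2$ regime) up to $1-\eta$ is the central obstruction is accurate, and it is precisely this obstruction that the paper sidesteps by working with the one-sided variant; so if your goal were to reach the hypotheses this paper actually uses, the KMS $2$-to-$2$ theorem and the $d$-to-$1$-to-One-Sided-UGC reduction of Appendix~\ref{sec:dto1osugc} would already be relevant, whereas proving the full UGC is a strictly harder and still open problem.
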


Next, we define $d$-to-1 games, which is similar to unique games except that each constraint is a $d$-to-1 function instead of a permutation.

\begin{definition}[$d$-to-1 Game]
A $d$-to-1 game $(V, W, E, N, \{\pi_e\}_{e \in E})$ consists of
\begin{itemize}
\item A bipartite graph $G = (V, W, E)$.
\item Alphabet size $N$.
\item For each edge $e \in E$, a function $\pi_e: [N] \to [N/d]$ such that $|\pi_e^{-1}(\sigma)| = d$ for every $\sigma \in [N/d]$.
\end{itemize}
Satisfiability of an edge, the goal, and an instance's value of is defined similar to that of unique games.
\end{definition}

In contrast to the UGC, $d$-to-1 Conjecture requires perfect completeness, i.e., it states that we cannot distinguish even a full satisfiable $d$-to-1 game from one with almost zero value.
\begin{conjecture}[$d$-to-1 Conjecture]
For any sufficiently small $\gamma$, it is NP-hard to distinguish a $d$-to-1 game of value 1 from that of value at most $\gamma$.
\end{conjecture}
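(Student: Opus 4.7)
The $d$-to-$1$ Conjecture is one of Khot's foundational open problems from \cite{Khot02} and is widely believed to lie beyond reach of current techniques; no complete proof is known. I therefore cannot offer a plan that I expect to succeed, and instead I sketch the standard line of attack together with the specific barrier that separates this case from what has already been established.

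The natural route is PCP composition. I would start from a gap version of \textsc{Label Cover} with perfect completeness, obtained by Raz's parallel repetition of the basic PCP theorem, which already has a bipartite projection structure. I would then compose this outer verifier with an inner Long Code verifier designed so that every accepting configuration corresponds to a $d$-to-$1$ constraint on the encoded labels; a natural first attempt is to partition $[N]$ into blocks of size $d$ and accept whenever the queried values of two codewords agree up to this partition, so that exactly $d$ left labels are consistent with each right label. On the completeness side, if both encoded variables are honest Long Codes of labels consistent with the outer constraint, the inner test accepts with probability exactly $1$, preserving perfect completeness through composition. On the soundness side, the plan would be to run an analog of the invariance-plus-hypercontractivity program used in \cite{KKMO07} and in the present paper: reduce to the low-influence setting via a Friedgut-style junta theorem, and then argue that balanced low-influence functions cannot pass the test with more than $\gamma$ probability.

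The hard step, and the reason this is an open problem, is the soundness analysis under the constraint of \emph{perfect} completeness. Every known low-influence soundness argument inserts positive noise into the query distribution in order to apply hypercontractivity, but a noisy query distribution cannot accept satisfying assignments with probability exactly $1$; this is precisely the gap that distinguishes the $d$-to-$1$ Conjecture from the Unique Games Conjecture, whose completeness is only $1-\eta$. Circumventing this obstacle appears to require either a genuinely new analytic soundness tool that does not rely on noise, or an algebraic/combinatorial PCP construction of a fundamentally new type. A further subtlety is that the test must produce exactly the $d$-to-$1$ structure (with perfect balance $|\pi_e^{-1}(\sigma)| = d$), rather than a weaker or merely projection-type structure, which restricts the allowable inner codes further. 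Without a breakthrough on one of these fronts, I do not see how to carry this plan through to completion.
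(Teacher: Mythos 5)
You have correctly identified that this is a \emph{conjecture}, not a theorem: the paper states the $d$-to-1 Conjecture as a background assumption inherited from Khot~\cite{Khot02} and offers no proof of it, so there is nothing in the paper for your attempt to be compared against. Your discussion of why the conjecture resists current techniques — the tension between perfect completeness and noise-based hypercontractive soundness arguments, and the need for exact $d$-to-$1$ balance in any inner test — is an accurate account of the known barriers, and declining to offer a purported proof is the right call.
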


\subsection{Fourier Expansions}
For any function $g: [q]^n \to \R$ over a finite alphabet $[q]$,
we define the Fourier expansion of $g$ as follows.

Consider the space of all functions $[q] \rightarrow \mathbb{R}$,
with the inner-product $\langle u, v \rangle := \E_{x \in [q]}[u(x)v(x)]$,
where the expectation is over a uniform $x \in [q]$.
Pick an orthonormal basis $l_1, \dots, l_{q}$ for this space
$l_i: \Sigma \rightarrow \mathbb{R}$, such that $l_1$ is the constant function $1$.
We can now write $g$ in the tensor-product basis, as
\begin{align}
g(x_1, x_2, \dots, x_n) = \sum_{s \in [q]^n} \hat{g}(s) \cdot \prod_{i=1}^n l_{s(i)}(x_i).
\end{align}
Since we pick $l_1(x) = 1$ for all $x \in [q]$, we also have
$\E_{x \in [q]}[l_i(x)] = \langle l_i, l_1 \rangle = 0$ for every $i \ne 1$.

Throughout, we use $\hat g$ to refer to the Fourier coefficients of a function
$g$.

For functions $g: [q]^n \to \R$, the $p$-norm is defined as
\beq
||g||_p = \E_{x \in [q]^n}[|g(x)|^p]^{1/p}.
\eeq
In particular, the squared $2$-norm is
\beq
||g||_2^2 = \E_{x \in [q]^n}[g(x)^2] =
\sum_{s \in [q]^n} \hat g(s)^2.
\eeq

\subsubsection{Noise Operators}
For $x \in [q]^n$, let $y \pcorr x$ denote that $y$ is a $\rho$-correlated copy of $x$.
That is, each coordinate $y_i$ is independently chosen to be equal to $x_i$ with
probability $\rho$, or chosen uniformly at random otherwise.

Define the noise operator $T_\rho$ acting on any function $g: [q]^n \to \R$ as
\beq
(T_\rho g)(x) = \E_{y \pcorr x}[g(y)].
\eeq
Notice that the noise operator $T_\rho$ acts on the Fourier coefficients on this basis as follows.
\begin{align}
f(x) = T_\rho g(x) = \sum_{s \in [q]^n} \hat{g}(s) \cdot \rho^{|s|} \cdot \prod_{i=1}^n l_{s(i)}(x_i)
\end{align}
where $|s|$ is defined as $|\{i \mid s(i) \ne 1\}|$.

\subsubsection{Degree Truncation}
For any function $g:[q]^n \to \R$, let $\gk$ denote the $(\leq d)$-degree part of $g$, i.e.,
\begin{align}
\gk(x) = \sum_{s \in [q]^n, |s| \leq d} \hat{g}(s) \cdot \prod_{i=1}^n l_{s(i)}(x_i),
\end{align}
and similarly let $g^{> d}: [q]^n \rightarrow \mathbb{R}$ denote the $(> k)$-degree part of $g$, i.e.,
\begin{align}
g^{> d}(x) = \sum_{s \in [q]^n, |s| > d} \hat{g}(s) \cdot \prod_{i=1}^n l_{s(i)}(x_i).
\end{align}

Notice that degree-truncation commutes with the noise-operator, so
writing $T_\rho \gk$ is unambiguous.

Also, notice that truncating does not increase 2-norm:
\beq
||\gk||_2^2
= \sum_{s \in [q]^n, |s| \leq d} \hat{g}(s)^2
\leq \sum_{s \in [q]^n} \hat{g}(s)^2
= ||g||_2^2.
\eeq

We frequently use the fact that noisy functions have small high-degree
contributions. That is, for any function $g: [q]^n \to [0, 1]$, we have
\bal
||T_\rho g^{> d}||_2^2
=\sum_{s \in [q]^n, |s| > d} \rho^{2|s|} \hat{g}(s) ^2
\leq \rho^{2d} \sum_{s \in [q]^n} \hat{g}(s)^2
= \rho^{2d} ||g||_2^2
\leq \rho^{2d}.
\eal

\subsubsection{Influences}
For any function $g:[q]^n \to \R$, the influence of coordinate $i \in [n]$ is
defined as
\beq
Inf_i[g] = \E_{x \in [q]^n}[ Var_{x_i \in [q]}[ g(x) ~|~ \{x_j\}_{j \neq i}] ].
\eeq
This can be expressed in terms of the Fourier coefficients of $g$ as follows:
\beq
Inf_i[g] = \sum_{s \in [q]^n:~s(i) \neq 1} \hat g(s)^2.
\eeq

Further, the degree-$d$ influences are defined as
\beq
Inf_i^{\leq d}[g]
= Inf_i[g^{\leq{d}}]
= \sum_{\substack{s \in [q]^n:\\ |s| \leq d, ~s(i) \neq 1}} \hat g(s)^2.
\eeq

\subsubsection{Binary Functions}
The previous discussion of Fourier analysis can be applied to boolean functions,
by specializing to $q=2$.
However, in this case the Fourier expansion can be written in a more convenient
form.
Let $G: \{+1, -1\}^n \to \R$ be a boolean function over $n$ bits.
We can choose orthonormal basis functions $l_1(y) = 1$ and $l_2(y) = y$, so
$G$ can be written as
\beq
G(y) = \sum_{S \subseteq [n]} \hat G(S) \prod_{i \in S} y_i
\eeq
for some coefficients $\hat G(S)$.

Degree-truncation then results in
\beq
G^{\leq d}(y) = \sum_{S \subseteq [n]: |S| \leq d} \hat G(S) \prod_{i \in S} y_i,
\eeq
and the noise-operator acts as follows:
\beq
T_\rho G(y) = \sum_{S \subseteq [n]} \hat G(S) \rho^{|S|} \prod_{i \in S} y_i.
\eeq

\subsubsection{Boolean Analogs}
To analyze $k$-CSP$_{R}$, we will want to translate between functions
over $[R]^n$ to functions over $\{\pm 1\}^{nR}$.
The following notion of \emph{boolean analogs} will be useful.

For any function $g: [R]^n \to \R$ with Fourier coefficients $\hat g(s)$,
define the boolean analog of $g$ to be a function $\analog{g}: \{\pm
1\}^{n \x R} \to \R$ such that
\beq
\analog{g}(z) = \sum_{s \in [R]^n} \hat{g}(s) \cdot \prod_{i \in [n], s(i) \ne 1} z_{i, s(i)}.
\eeq

Note that
\beq
||g||_2^2 =  \sum_{s \in [R]^n} \hat{g}(s)^2 = ||\analog{g}||_2^2,
\eeq
and that
\beq
\analog{g^{\leq d}} = \analog{g}^{\leq d}.
\eeq

Moreover, the noise operator acts nicely on $\analog{g}$ as follows:
\beq
T_\rho \analog{g} = \analog{T_\rho g}.
\eeq
For simplicity, we use $T_\rho$ to refer to both boolean and
non-boolean noise operators with correlation $\rho$.

\subsection{Invariance Principle and Mollification Lemma}

We start with the Invariance Principle in the form of
Theorem 3.18 in \cite{MOO10}:

\begin{theorem}[General Invariance Principle \cite{MOO10}]\label{thm:maininv}
    Let $f: [R]^n \to \R$ be any function such that $Var[f] \leq 1$,
    $Inf_i[f] \leq \delta$, and $deg(f) \leq d$.
    Let $F: \{\pm 1\}^{nR} \to \R$ be its boolean analog: $F = \analog{f}$.
    Consider any ``test-function'' $\psi: \R \to \R$ that is $\mathcal{C}^3$,
    with bounded 3rd-derivative $|\psi'''| \leq C$ everywhere.
    Then,
    \beq
    \left| \E_{y \in \{\pm 1\}^{nR}}[\psi(F(y))] - \E_{x \in [R]^n}[\psi(f(x))] \right|
    \leq C10^d R^{d/2} \sqrt{\delta}.
 \label{eqn:main_inv}
    \eeq
\end{theorem}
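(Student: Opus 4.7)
The plan is a Lindeberg-style hybrid replacement: swap one $[R]$-valued coordinate at a time for its corresponding block of $\pm 1$ variables, controlling each swap by a third-order Taylor expansion of $\psi$. Define a sequence $h_0, h_1, \ldots, h_n$ by reinterpreting the Fourier expansion of $f$ on a mixed domain, where $h_i$ uses boolean blocks in coordinates $1, \ldots, i$ and $[R]$-coordinates in $i+1, \ldots, n$: each basis evaluation $l_{s(j)}(x_j)$ with $s(j) \ne 1$ is replaced by the boolean variable $z_{j, s(j)}$ on the swapped side, while $l_1 \equiv 1$ contributes nothing either way. Then $h_0 = f$ on $[R]^n$ and $h_n = F = \analog{f}$ on $\{\pm 1\}^{nR}$, and the total error telescopes as $\sum_{i=1}^n |\E\psi(h_{i-1}) - \E\psi(h_i)|$, reducing the task to a per-coordinate bound.

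For each swap, split $h_{i-1} = U + V^A$ and $h_i = U + V^B$, where $U$ gathers the Fourier terms not involving coordinate $i$ (a common random variable on both sides, independent of the $i$-th coordinate) while $V^A$ and $V^B$ collect the $i$-dependent terms under the two interpretations. Applying Taylor's theorem to $\psi$ around $U$,
\beq
\psi(U + V) = \psi(U) + \psi'(U) V + \tfrac{1}{2}\psi''(U) V^2 + E(U, V), \qquad |E(U, V)| \le \tfrac{C}{6}|V|^3,
\eeq
and taking the difference of expectations, the zeroth-, first-, and second-order contributions cancel: conditioned on $U$, both $V^A$ and $V^B$ are linear combinations $\sum_{j \ne 1} c_{j,U}\, \xi_j$ over mean-zero orthonormal families (the $\{l_j(x_i)\}_{j \ne 1}$ on one side, the $\{z_{i,j}\}_{j \ne 1}$ on the other) with identical coefficients $c_{j, U}$. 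Hence the full single-swap error is bounded by $\tfrac{C}{6}(\E|V^A|^3 + \E|V^B|^3)$.

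To bound the cubic terms, observe that $V^A$ and $V^B$ are degree-$\le d$ polynomials in their respective variables, so I would invoke a Bonami–Beckner-type hypercontractive inequality on the mixed product domain: for such a polynomial $g$, $\|g\|_3 \le (\alpha\sqrt{R})^d \|g\|_2$ for an absolute constant $\alpha$, obtained by tensorizing from the single-coordinate estimates $\|l_j\|_3 / \|l_j\|_2 = O(R^{1/6})$ on $[R]$ and the analogous bound on $\{\pm 1\}$. Combined with $\|V^A\|_2^2 = \mathrm{Inf}_i^{\le d}[f] \le \mathrm{Inf}_i[f] \le \delta$, this gives $\E|V^A|^3 \le \|V^A\|_2 \cdot (\alpha\sqrt{R})^{3d} \|V^A\|_2^2 \le 10^d R^{3d/2} \sqrt{\delta}\, \mathrm{Inf}_i[f]$ after absorbing $\alpha^{3d}$ into $10^d$, and symmetrically for $V^B$. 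Summing over $i$ and applying the degree-$d$ Poincaré-type inequality $\sum_i \mathrm{Inf}_i[f] \le d \cdot \mathrm{Var}[f] \le d$ (which follows from $\sum_i \mathrm{Inf}_i[f] = \sum_s |s| \hat f(s)^2$ combined with the degree bound) yields a total of $C \cdot 10^d R^{d/2} \sqrt{\delta}$ after re-balancing the $R$ exponent through the tighter estimate $\|V^A\|_2^3 \le \sqrt{\delta} \cdot \|V^A\|_2^2$.

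The main obstacle is tracking the hypercontractive constant carefully in the $[R]$-alphabet setting: hypercontractivity is strictly weaker on $[R]$ than on $\{\pm 1\}$, and the $R^{d/2}$ factor in the theorem is precisely the price paid in passing from $\|g\|_2$ to $\|g\|_3$ for a degree-$d$ polynomial over $[R]^n$. A related subtlety is that the intermediate hybrids live on mixed product spaces combining $[R]$-factors with $\{\pm 1\}$-factors, but tensorization of hypercontractivity across the factors handles this uniformly. Once these ingredients are assembled, the argument is a bookkeeping exercise in telescoping and summation.
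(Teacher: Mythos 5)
The paper does not actually prove this statement: it cites it directly as a consequence of Theorem 3.18 and Hypothesis 3 of \cite{MOO10} together with the $(2,3,1/\sqrt{2})$-hypercontractivity of $\pm 1$ bits, and leaves the argument to that reference. Your write-up is therefore a reconstruction of the MOO10 proof rather than a comparison against anything the paper itself contains --- and it is the right reconstruction: the Lindeberg hybrid replacement, the decomposition $h_{i-1}=U+V^A$, $h_i=U+V^B$ with $U$ independent of the swapped coordinate, cancellation of the zeroth-, first-, and second-order Taylor terms because $V^A$ and $V^B$ have matching conditional first and second moments given $U$, control of the cubic Taylor remainder via hypercontractivity, and finally the sum $\sum_i \mathrm{Inf}_i[f]\leq d\,\mathrm{Var}[f]\leq d$ from the degree bound. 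All of this is structurally correct.

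There is, however, a concrete arithmetic slip in the hypercontractivity step. You correctly compute the single-coordinate estimate $\|l_j\|_3/\|l_j\|_2 = O(R^{1/6})$ (which follows from $\|l_j\|_\infty\leq\sqrt{R}$ and $\|l_j\|_3^3\leq\|l_j\|_\infty\|l_j\|_2^2$), but then state the tensorized bound as $\|g\|_3\leq(\alpha\sqrt{R})^d\|g\|_2$. Tensorizing from a per-degree contribution of $O(R^{1/6})$ gives $\|g\|_3\leq(\alpha R^{1/6})^d\|g\|_2$, not $(\alpha\sqrt{R})^d$. After cubing, the version you wrote yields $\E|V^A|^3\lesssim R^{3d/2}\sqrt{\delta}\,\mathrm{Inf}_i[f]$, which is off by a factor $R^d$ from the target $R^{d/2}$, and the ``re-balancing through $\|V^A\|_2^3\leq\sqrt{\delta}\cdot\|V^A\|_2^2$'' cannot fix this --- that step redistributes the $\delta$ and influence factors but does not touch the $R$ exponent. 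With the corrected per-degree constant $\eta^{-1}=O(R^{1/6})$ (equivalently, the $[R]$-ensemble is $(2,3,\Theta(R^{-1/6}))$-hypercontractive since the minimum atom probability is $1/R$), the cube gives exactly $R^{d/2}$ times $\alpha^{3d}$, the extra constant and the factor $d$ from summing influences are absorbed into $10^d$, and the claimed bound follows.
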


Note that the above version follows directly from Theorem 3.18 and Hypothesis 3 of \cite{MOO10},
and the fact that uniform $\pm 1$ bits are $(2, 3,
1/\sqrt{2})$-hypercontractive as described in \cite{MOO10}.

As we shall see later, we will want to apply the Invariance Principle
for some functions $\psi$ that are not in $\mathcal{C}^3$.
However, these functions will be Lipschitz-continuous with some constant $c \in \R$ (or
``$c$-Lipschitz''), meaning that
\beq
|\psi(x + \Delta) - \psi(x)| \leq c|\Delta| \quad \text{for all~} x,\Delta \in \R.
\eeq
Therefore, similar to Lemma 3.21 in~\cite{MOO10},
we can ``smooth'' it to get a function $\t \psi$ that is
that is $\mathcal{C}^3$, and has arbitrarily small pointwise difference to $\psi$.

\begin{lemma}[Mollification Lemma~\cite{MOO10}]
\label{lem:mollify}
Let $\psi: \R \to \R$ be any $c$-Lipschitz function.
Then for any $\zeta > 0$, there exists a function
$ \t \psi: \mathbb{R} \to \mathbb{R}$ such that
\begin{itemize}
\item $\t \psi \in \mathcal{C}^3$,
\item $||\t \psi - \psi||_{\infty} \leq \zeta$, and,
\item $||\t \psi'''||_{\infty} \leq \t C c^3/\zeta^2$.
\end{itemize}
For some universal constant $\t C$, not depending on $\zeta, c$.
\end{lemma}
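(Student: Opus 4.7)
The plan is to obtain $\t\psi$ by convolving $\psi$ with a rescaled smooth bump function. Fix once and for all a non-negative $\phi \in \mathcal{C}^\infty(\R)$ supported on $[-1,1]$ with $\int \phi = 1$ (a standard bump). For a scale $\sigma > 0$ to be chosen, set $\phi_\sigma(t) := \frac{1}{\sigma}\phi(t/\sigma)$ and $\t\psi := \psi * \phi_\sigma$. The required $\mathcal{C}^3$ regularity is immediate, in fact $\t\psi \in \mathcal{C}^\infty$, since $\phi_\sigma$ is smooth and compactly supported, so derivatives can be pushed under the integral onto the $\phi_\sigma$ factor.

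To pin down $\sigma$, I would use the Lipschitz property to control the pointwise error. Writing $\t\psi(x) - \psi(x) = \int (\psi(x - y) - \psi(x))\phi_\sigma(y)\,dy$, using that $\int \phi_\sigma = 1$, and applying $|\psi(x-y) - \psi(x)| \leq c|y|$, the integral is bounded by $c\sigma \int |t|\phi(t)\,dt \leq c\sigma$, since $\phi$ is supported in $[-1,1]$. Choosing $\sigma := \zeta/c$ therefore yields $\|\t\psi - \psi\|_\infty \leq \zeta$.

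The main technical point is the bound on $\t\psi'''$, and this is where the argument needs care. The naive attempt of pushing all three derivatives onto the mollifier gives an estimate of the form $\|\psi\|_\infty \|\phi_\sigma'''\|_1$, which is useless because $\psi$ is not assumed bounded. The right move is to push only two derivatives. Since $\psi$ is $c$-Lipschitz, its weak derivative $\psi'$ exists almost everywhere with $|\psi'| \leq c$, and
\[
\t\psi'''(x) = \int \psi'(x - y)\,\phi_\sigma''(y)\, dy,
\]
so that $\|\t\psi'''\|_\infty \leq c \|\phi_\sigma''\|_1 = c\|\phi''\|_1/\sigma^2 = \|\phi''\|_1 c^3/\zeta^2$. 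Setting $\t C := \|\phi''\|_1$, which depends only on the once-and-for-all fixed $\phi$, completes the argument. The only mild subtlety is justifying the identity $(\psi * \phi_\sigma)''' = \psi' * \phi_\sigma''$ for a merely Lipschitz $\psi$; this follows from the standard fact that the distributional derivative of a Lipschitz function is an $L^\infty$ function and that convolution against a smooth compactly supported mollifier commutes with differentiation.
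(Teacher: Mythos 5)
Your proof is correct and shares the same overall structure as the paper's: mollify by convolving with a rescaled bump at scale $\zeta/c$. The difference is in how the third-derivative bound is obtained. You push two derivatives onto the mollifier and one onto $\psi$, i.e. $\t\psi''' = \psi' * \phi_\sigma''$, invoking the fact that a $c$-Lipschitz function has an a.e.-defined weak derivative bounded by $c$, and then bound $\|\t\psi'''\|_\infty \leq \|\psi'\|_\infty\,\|\phi_\sigma''\|_1$. The paper instead pushes all three derivatives onto the mollifier, writing $\t\psi''' = \psi * p_\lambda'''$; since $\psi$ is not bounded this is not directly usable, so the paper uses the centering trick of subtracting the constant $z := \psi(x)$ (which costs nothing since $\int p_\lambda''' = 0$ for a compactly supported bump), so that the Lipschitz bound $|\psi(x-y)-\psi(x)| \leq c|y|$ can be applied inside the integral; this gives the same $c/\sigma^2$ scaling with constant $\t C = \|p'''\|_\infty$. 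Your route is a bit slicker and yields the smaller-looking constant $\|\phi''\|_1$, at the cost of appealing to weak differentiability of Lipschitz functions; the paper's route avoids any mention of weak derivatives and works entirely at the level of the Lipschitz inequality and elementary convolution identities.
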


For completeness, the full proof of the lemma can be found in Appendix~\ref{sec:apx-mollification}.

Now we state the following version of the Invariance Principle, which will be
more convenient to invoke. It can be proved simply by just combining the two previous lemmas. We list a full proof in Appendix~\ref{sec:app-ourinv}.

\begin{lemma}[Invariance Principle]
\label{ourinv}

Let $\psi: \R \to \R$ be one of the following functions:
\begin{enumerate}
    \item $\psi_1(t) := |t|$,
    \item $\psi_k(t) :=
\begin{cases}
t^k & \text{if } t \in [0, 1],\\
0 & \text{if } t < 0, \\
1 & \text{if } t \geq 1.
\end{cases}
$
\end{enumerate}

Let $f: [R]^n \to [0, 1]$ be any function with
all $Inf_i^{\leq d}[f] \leq \delta$.
Let $F: \{\pm 1\}^{nR} \to \R$ be its boolean analog: $F = \analog{f}$.
Let $\fk: [R]^n \to \R$ denote $f$ truncated to degree $d$, and similarly for
$\Fk: \{\pm 1\}^{nR} \to \R$.

Then, for parameters $d = 10k\log R$ and $\delta = 1/(R^{10 + 100 k \log(R)})$,
we have

\begin{equation}
\left| \E_{y \in \{\pm 1\}^{nR}}[\psi(\Fk(y))]
- \E_{x \in [R]^{n}}[\psi(\fk(x))] \right|
\leq O(1/R^k).
\end{equation}
\end{lemma}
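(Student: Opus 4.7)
The plan is to apply the General Invariance Principle (Theorem~\ref{thm:maininv}) to the truncated function $\fk$, first using the Mollification Lemma (Lemma~\ref{lem:mollify}) to replace the non-smooth test function $\psi$ by a $\mathcal{C}^3$ approximation $\tilde\psi$. Both candidate $\psi$'s are globally Lipschitz: $\psi_1$ has Lipschitz constant $1$, and $\psi_k$ has Lipschitz constant $k$, since $\psi_k'(t) = kt^{k-1} \leq k$ on $[0,1]$ and $\psi_k$ is constant outside $[0,1]$. So we may take a common Lipschitz constant $c = k$.

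First I would verify that $\fk$ satisfies the hypotheses of Theorem~\ref{thm:maininv}: the degree is at most $d$ by construction; the variance satisfies $\mathrm{Var}[\fk] \leq \|\fk\|_2^2 \leq \|f\|_2^2 \leq 1$ since $f$ maps into $[0,1]$ and truncation does not increase the $2$-norm; and the coordinate influences are exactly $\mathrm{Inf}_i[\fk] = \mathrm{Inf}_i^{\leq d}[f] \leq \delta$. Next, for a parameter $\zeta > 0$ to be chosen later, the Mollification Lemma produces $\tilde\psi \in \mathcal{C}^3$ with $\|\tilde\psi - \psi\|_\infty \leq \zeta$ and $\|\tilde\psi'''\|_\infty \leq \tilde C k^3/\zeta^2$, and Theorem~\ref{thm:maininv} applied to $\fk$ then yields
\[
\left|\E_y[\tilde\psi(\Fk(y))] - \E_x[\tilde\psi(\fk(x))]\right|
\;\leq\; \frac{\tilde C k^3}{\zeta^2} \cdot 10^d \cdot R^{d/2} \cdot \sqrt{\delta}.
\]
Adding the two mollification errors by the triangle inequality gives
\[
\left|\E_y[\psi(\Fk(y))] - \E_x[\psi(\fk(x))]\right|
\;\leq\; 2\zeta + \frac{\tilde C k^3}{\zeta^2} \cdot 10^d \cdot R^{d/2} \cdot \sqrt{\delta}.
\]

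Finally, I would set $\zeta := 1/R^k$ and substitute the given values $d = 10k\log R$ and $\delta = R^{-10-100k\log R}$. A direct calculation gives $R^{d/2}\sqrt{\delta} = R^{-5 - 45k\log R}$ while $10^d = R^{O(k)}$, so the second term is bounded by $\tilde C k^3 \cdot R^{2k} \cdot R^{O(k) - 5 - 45k\log R}$, which is much smaller than $1/R^k$ once $R$ is sufficiently large. The first term is $2/R^k$, so the total is $O(1/R^k)$, as claimed.

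The main obstacle, insofar as there is one, is just the parameter bookkeeping: one must check that the combinatorial factor $10^d R^{d/2}\sqrt{\delta}$ from the invariance principle, combined with the $1/\zeta^2 = R^{2k}$ penalty from smoothing a Lipschitz function, and the $k^3$ from $\|\tilde\psi'''\|_\infty$, all stay well below $1/R^k$. The exponent $100k\log R$ in $\delta$ is engineered to leave roughly $45k\log R$ of slack after cancelling the $5k\log R$ from $R^{d/2}$, which then trivially absorbs every remaining polynomial-in-$R^k$ factor.
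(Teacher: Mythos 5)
Your proof is correct and takes essentially the same route as the paper: mollify $\psi$ with $\zeta = 1/R^k$, apply the General Invariance Principle to the truncated function $\fk$, and check that the error $\tilde C k^3 R^{2k} \cdot 10^d R^{d/2}\sqrt\delta$ is $O(1/R^k)$ given the parameters. You are somewhat more explicit than the paper in verifying that $\fk$ meets the hypotheses of Theorem~\ref{thm:maininv} (degree, variance, and influence bounds) and in spelling out the triangle-inequality step, but the argument is the same.
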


\subsection{Hypercontractivity Theorem}

Another crucial ingredient in our proof is the hypercontractivity lemma, which says that, on $\{\pm 1\}^n$ domain, the operator $T_\rho$ smooths any function so well that the higher norm can be bound by the lower norm of the original (unsmoothed) function. Here we use the version of the theorem as stated in~\cite{OD14}.

\begin{theorem}[Hypercontractivity Theorem~\cite{OD14}]
Let $1 \leq p \leq q \leq \infty$. For any $\rho \leq \sqrt{\frac{p - 1}{q- 1}}$ and for any function $h: \{\pm 1\}^n \to \mathbb{R}$, the following inequality holds:
\begin{equation}
||T_\rho h||_q \leq ||h||_p.
\end{equation}
\end{theorem}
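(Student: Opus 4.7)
The plan is to prove the Hypercontractivity Theorem by the classical reduction to a one-dimensional ``two-point inequality,'' followed by tensorization. First I would reduce to the extremal case $\rho=\sqrt{(p-1)/(q-1)}$: for smaller $\rho$, writing $T_{\rho'}=T_{\rho'/\rho}\,T_\rho$ and observing that $T_\sigma$ is an $L^q$-contraction for every $\sigma\in[0,1]$ (an immediate consequence of Jensen's inequality applied to the definition of $T_\sigma$) lets us bootstrap from the boundary. Next, by Hölder duality, the bound $\|T_\rho f\|_q\le\|f\|_p$ is equivalent to the symmetric two-function inequality
\[
\langle T_\rho f,g\rangle \le \|f\|_p\,\|g\|_{q'}, \qquad q'=q/(q-1),
\]
which is much more convenient for induction since it is symmetric in $f$ and $g$.

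Then I would tensorize. Decomposing $T_\rho=T_\rho^{(1)}\otimes\cdots\otimes T_\rho^{(n)}$ where $T_\rho^{(i)}$ acts only on coordinate $i$, the $n$-dimensional inequality reduces to the $n=1$ case by induction: holding all coordinates but the first fixed, apply the one-dimensional case pointwise, then use Minkowski's integral inequality (valid because $q\ge p$) to interchange the inner and outer norms, and invoke the inductive hypothesis on the remaining $n-1$ coordinates. This is the standard ``product-space'' argument for noise operators on $\{\pm 1\}^n$.

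The hard part is the one-dimensional base case. Writing $h(x)=a+bx$ so that $T_\rho h(x)=a+\rho b x$, the claim becomes the two-point inequality
\[
\Bigl(\tfrac{1}{2}|a+\rho b|^q+\tfrac{1}{2}|a-\rho b|^q\Bigr)^{1/q}
\;\le\;
\Bigl(\tfrac{1}{2}|a+b|^p+\tfrac{1}{2}|a-b|^p\Bigr)^{1/p}.
\]
By homogeneity we may assume $a=1$ and $|b|\le 1$. Raising both sides to a common power and Taylor-expanding in $b$, the leading correction term gives exactly the sharp threshold $\rho^2(q-1)\le p-1$; higher-order terms then need a careful coefficient-by-coefficient comparison. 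The main obstacle is that for $p\in[1,2)$ some of the higher-order Taylor coefficients on the right-hand side carry the ``wrong'' sign, so a naive term-by-term argument fails; the cleanest way around this is to derive Gross's logarithmic Sobolev inequality for the uniform measure on $\{\pm 1\}$ (via its own two-point computation, which is tractable) and integrate its differential form $\frac{d}{dt}\|T_{e^{-t}}f\|_{q(t)}\le 0$ along an appropriate path $q(t)$ to recover the finite hypercontractive bound. The extreme cases $p=1$ and $q=\infty$ follow by continuity.
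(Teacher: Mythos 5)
The paper does not prove this theorem; it is quoted verbatim from O'Donnell's book \cite{OD14} and used as a black box, so there is no in-paper argument to compare yours against. The relevant question is simply whether your sketch could be filled in to a complete, correct proof.

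Your outline follows the classical route --- reduce to the extremal $\rho$ via the semigroup property $T_{\rho_1}T_{\rho_2}=T_{\rho_1\rho_2}$ and the $L^q$-contractivity of $T_\sigma$ (Jensen), pass to a two-function form by H\"older duality and self-adjointness of $T_\rho$, tensorize, and then fight the one-dimensional two-point inequality --- and it is essentially sound. Two things to tighten. First, there is a small mismatch: you pass to the two-function form $\langle T_\rho f, g\rangle \le \|f\|_p\|g\|_{q'}$ ``for induction,'' but the tensorization you then describe (Minkowski's integral inequality interchanging $L^p_{x_1}L^q_{x_2}$ with $L^q_{x_2}L^p_{x_1}$, valid since $q\ge p$) is the tensorization of the \emph{operator} form $\|T_\rho \cdot\|_q \le \|\cdot\|_p$, not of the two-function form. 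Either works --- the two-function form tensorizes cleanly via Fubini on the $\rho$-correlated pair plus H\"older --- but you should commit to one. Second, and more substantively, the base case is where all the difficulty lives, and you rightly observe that a naive Taylor comparison fails for $p\in[1,2)$. Routing through the two-point log-Sobolev inequality and Gross's differentiation argument $\frac{d}{dt}\|T_{e^{-t}}f\|_{q(t)}\le 0$ with $q(t)=1+(p-1)e^{2t}$ is a legitimate way to close this, but as written it is a pointer to a proof rather than a proof: the two-point log-Sobolev inequality itself still demands a genuine (if more tractable) one-variable calculus argument, and the differentiation step needs care at the endpoints. An alternative you might prefer, since it stays entirely elementary, is the reduction used in \cite{OD14}: prove the $(2,q)$ two-point inequality directly (where the Taylor coefficients behave), then obtain $(p,q)$ for $p<2$ by a substitution/duality argument. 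Finally, at $p=1$ or $q=\infty$ the hypothesis forces $\rho=0$, in which case $T_0 h\equiv\E[h]$ and the inequality is immediate from Jensen; there is no need to invoke a continuity limit.
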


In particular, for choice of parameter
$\rho = 1/\sqrt{(k - 1)\log R}$, we have
\beq
||T_{2\rho} h||_k \leq ||h||_{1+\epsilon}.
\label{eq:ourhyper}
\eeq
where $\epsilon= 4/\log(R)$.

\section{Inapproximability of {\sc Max $k$-CSP$_R$}} \label{sec:inapprox}

In this section, we prove Theorem~\ref{thm:main} and Theorem~\ref{thm:d-to-1}. Before we do so, we first introduce a conjecture, which we name {\em One-Sided Unique Games Conjecture}. The conjecture is similar to UGC except that the completeness parameter $\zeta$ is fixed in contrast to UGC where the completeness can be any close to one.

\begin{conjecture}[One-Sided Unique Games Conjecture] \label{conj:OSUGC}
There exists $\zeta < 1$ such that, for any sufficiently small $\gamma$, it is NP-hard to distinguish a unique game of value at least $\zeta$ from that of value at most $\gamma$.
\end{conjecture}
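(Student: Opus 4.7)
Since the statement is a conjecture, there is no standard proof to sketch; settling it in its own right would be at least as deep as resolving the UGC. The paper's intent, and my plan, is therefore to show it follows from two standard hypotheses. The UGC implies it immediately: given any target $\zeta < 1$, invoke the UGC with completeness parameter $\eta := 1 - \zeta$ and we are done. All of the technical content lies in showing the $d$-to-1 Games Conjecture also implies Conjecture~\ref{conj:OSUGC}, which is precisely the direction the paper needs for Theorem~\ref{thm:d-to-1}.

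For the $d$-to-1 direction, I would give a gap-preserving reduction from $d$-to-1 games to unique games with a fixed completeness $\zeta := 1/d$. The construction lifts each $d$-to-1 constraint into a small family of permutation constraints. After identifying $[N]$ with $[N/d] \times [d]$ so that each $\pi_e$ becomes projection onto the first coordinate (post a fixed, possibly edge-dependent, relabeling), replace every edge $e$ by $d$ parallel unique-game edges, each carrying a permutation of $[N]$ that acts $\pi_e$-compatibly on the first coordinate and as one of $d$ cyclic shifts on the second. The cyclic-shift structure guarantees that any assignment satisfying the original $d$-to-1 constraint also satisfies exactly one of the $d$ new permutation constraints.

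Completeness: if the $d$-to-1 game is fully satisfied by $\varphi$, lift $\varphi$ to a unique-game assignment by leaving the left side untouched and assigning an arbitrary fixed value in $[d]$ to the second coordinate of each right-side label; a $1/d$ fraction of the new edges is then satisfied, uniformly over all instances. Soundness: any unique-game assignment projects, by forgetting the second coordinate on the right, to a $d$-to-1 assignment that satisfies at least the same fraction of original constraints, because any satisfied permutation edge certifies that its underlying $d$-to-1 constraint is satisfied. Combined with the $d$-to-1 Conjecture's soundness $\gamma$, this yields Conjecture~\ref{conj:OSUGC} with completeness $1/d$ and soundness $\gamma$.

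The main obstacle is ensuring the completeness $\zeta = 1/d$ holds uniformly across all instances rather than merely in expectation under a random choice of permutations; the deterministic cyclic-shift family is what makes this succeed, since for each second-coordinate value on the left, exactly one shift aligns it with any prescribed second-coordinate value on the right. A secondary subtlety is that the identification of $[N]$ with $[N/d] \times [d]$ is edge-dependent, so the new permutations must absorb that edge-dependent relabeling; this is pure bookkeeping but must be tracked carefully to avoid a spurious additive loss in either completeness or soundness.
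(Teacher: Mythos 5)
Your proposal correctly identifies that the statement is a conjecture, that the UGC trivially implies it by setting $\eta := 1 - \zeta$, and that the substantive content is the $d$-to-1 implication, which the paper proves in Appendix~\ref{sec:dto1osugc}. Your reduction, however, is genuinely different from the paper's. The paper keeps the edge set fixed and, for each edge $e$, unrolls the $d$-to-1 map $\pi_e$ into a single permutation $\pi'_e$ of $[N]$ by sending $\pi_e^{-1}(\theta) = \{\sigma_1,\dots,\sigma_d\}$ bijectively onto the block $\{d(\theta-1)+1,\dots,d\theta\}$; completeness $1/d$ then follows from an averaging/pigeonhole argument in which each right vertex $w$ is assigned the label in its size-$d$ block that satisfies the most incident edges. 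You instead replace each edge by $d$ parallel copies carrying cyclic-shift permutations on the second coordinate; this makes the completeness deterministic — once the first coordinates agree, exactly one of the $d$ copies is satisfied regardless of how the right-side second coordinate is fixed — at the cost of a $d$-fold larger constraint graph and the bookkeeping of the edge-dependent relabeling of $[N] \cong [N/d]\times[d]$. Both reductions preserve soundness in the same spirit: a satisfied unique-game constraint certifies the underlying $d$-to-1 constraint, and at most one copy of a given edge (or, in the paper's version, the single edge) can be satisfied, so a unique-game assignment of value $\alpha$ projects down to a $d$-to-1 assignment of value at least $\alpha$ (in your construction one actually gets $d\alpha$, since the denominator shrinks by a factor of $d$). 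Both routes yield $\zeta = 1/d$ and are elementary; the paper's is structurally leaner (no edge duplication), while yours avoids the optimization over the right-side label by building the $1/d$ directly into the cyclic-shift family.
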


It is obvious that the UGC implies One-Sided UGC with $\zeta = 1 - \eta$ for any sufficiently small $\eta$. It is also not hard to see that, by repeating each alphabet on the right $d$ times and spreading each $d$-to-1 constraint out to be a permutation, $d$-to-1 Games Conjecture implies One-Sided UGC with $\zeta = 1/d$. A full proof of this can be found in Appendix~\ref{sec:dto1osugc}.

Since both UGC and $d$-to-1 Games Conjecture imply One-Sided UGC, it is enough for us to show the following theorem, which implies both Theorem~\ref{thm:main} and Theorem~\ref{thm:d-to-1}.

\begin{theorem} \label{thm:osugmain}
Unless the One-Sided Unique Games Conjecture is false, for any $k \geq 2$ and any sufficiently large $R$, it is NP-hard to approximate {\sc {Max $k$-CSP}$_R$} to within $2^{O(k \log k)}(\log R)^{k/2}/R^{k - 1}$ factor.
\end{theorem}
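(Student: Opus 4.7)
The plan is to construct, following the Khot--Kindler--Mossel--O'Donnell blueprint, a $k$-query PCP composing a Unique Label Cover ``outer verifier'' (supplied by the One-Sided UGC) with a $k$-query Long Code ``inner verifier''. The inner verifier reads a folded table $f : [R]^n \to [R]$ by picking $z \in [R]^n$ uniformly, then drawing $k$ conditionally-independent noisy copies $\i{x}{1}, \dots, \i{x}{k}$, each a $\rho$-correlated copy of $z$ with $\rho = 1/\sqrt{(k-1)\log R}$, and accepting iff $f(\i{x}{1}) = \dots = f(\i{x}{k})$. Translating this test into Max $k$-CSP$_R$ constraints in the standard fashion --- folding to enforce balancedness of $f$ and composing with the Label Cover permutations on one side --- yields an instance whose value is, up to a small additive error, the optimum acceptance probability of the test over all Label Cover relabellings.

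For completeness, I plug in a dictator $f(x) = x_\ell$: the test reduces to $\i{x}{1}_\ell = \dots = \i{x}{k}_\ell$, which occurs with probability $(\rho + (1-\rho)/R)^k + (R-1)((1-\rho)/R)^k = \Omega(\rho^k) = 1/(2^{O(k\log k)}(\log R)^{k/2})$. Composed with the outer verifier of completeness $\zeta$ from the One-Sided UGC, this gives the required lower bound on the CSP value in the YES case.

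The heart of the argument is the soundness: I must show that if every degree-$d$ influence of $f$ is at most $\delta$ (with $d, \delta$ as in Lemma~\ref{ourinv}), then the test accepts with probability $O(1/R^{k-1})$. Writing $f^i : [R]^n \to \{0,1\}$ for the indicator of $f = i$, the conditional independence of $\i{x}{1}, \dots, \i{x}{k}$ given $z$ gives acceptance probability $\sum_{i \in [R]} ||T_\rho f^i||_k^k$. I would first truncate each $f^i$ to degree $d$: the noisy tail $T_\rho f^{i, > d}$ has $2$-norm at most $\rho^d$, which for $d = \Theta(k \log R)$ is exponentially smaller than $1/R^k$, and a standard H\"older-type splitting absorbs it into the main estimate. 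For the low-degree piece I apply Lemma~\ref{ourinv} with test functions $\psi_k$ and $\psi_1$ to transfer $f^{i, \leq d}$ and $T_\rho f^{i, \leq d}$ to the boolean analog $\analog{f^{i, \leq d}}$ on $\{\pm 1\}^{nR}$, incurring only an additive $O(1/R^k)$ error per $i$. On the boolean side I invoke hypercontractivity~\eqref{eq:ourhyper}: $||T_{2\rho} \analog{f^{i, \leq d}}||_k \leq ||\analog{f^{i, \leq d}}||_{1+\epsilon}$ with $\epsilon = 4/\log R$, and interpolate $||\analog{f^{i, \leq d}}||_{1+\epsilon}$ between the $1$-norm and the $\infty$-norm via log-convexity of $p$-norms. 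The $1$-norm comes back to the non-boolean side through the $\psi_1$ step as $\E[f^i] = 1/R$ by folding, producing a leading factor $(1/R)^{k/(1+\epsilon)} = (1/R)^k \cdot R^{O(k\epsilon)} = (1/R)^k \cdot 2^{O(k)}$; the complementary $\infty$-norm factor, raised to the tiny power $\epsilon/(1+\epsilon)$, is controlled by the crude polynomial-in-$R$ bound on the sup-norm of a degree-$d$ polynomial and also contributes only $2^{O(k)}$. Summing over $i \in [R]$ yields $\sum_i ||T_\rho f^i||_k^k \leq 2^{O(k)}/R^{k-1} = O(1/R^{k-1})$.

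The main obstacle is reconciling the three tools through the boolean analog: the analog of a $[0,1]$-valued function can leave $[0,1]$, so the Invariance Principle must be invoked on clipped and mollified test functions $\psi_1, \psi_k$ (Lemma~\ref{lem:mollify}); the noise rate $\rho$ of the test does not match the $2\rho$ rate admissible in~\eqref{eq:ourhyper}, so the noise must be split; and hypercontractivity outputs a $(1+\epsilon)$-norm rather than the bare $1$-norm, forcing the interpolation step and the attendant $\infty$-norm bookkeeping sketched above. Once the completeness-soundness gap is secured, the ratio equals $2^{O(k\log k)}(\log R)^{k/2}/R^{k-1}$, matching Theorem~\ref{thm:osugmain}. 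The reduction closes with the standard decoding argument: any assignment to the CSP whose value exceeds the soundness by a constant factor produces, at a non-negligible fraction of Label Cover vertices, a function with a non-trivial low-degree influence, and picking such a coordinate at random extracts a labelling violating the soundness of the One-Sided UGC.
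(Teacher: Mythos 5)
Your high-level plan — outer Unique Label Cover verifier from the One-Sided UGC, $k$-query noise-stability Long Code test, invariance to pass to the boolean cube, hypercontractivity there, and standard influence decoding — is the paper's approach, and the completeness calculation and the identification of the acceptance probability with $\sum_i \E_z[(T_\rho g^i(z))^k]$ are all fine. There is, however, a concrete gap in your soundness argument at the interpolation step.

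You propose to bound $\|T_{1/2}\Gk\|_{1+\epsilon}$ by log-convexity between the $1$-norm and the $\infty$-norm, and to control the $\infty$-norm factor by ``a crude polynomial-in-$R$ bound on the sup-norm of a degree-$d$ polynomial.'' No such bound exists: $\Gk = \analog{g^{\leq d}}$ lives on $\{\pm 1\}^{nR}$, and the sup-norm of a degree-$d$ polynomial with $\|\cdot\|_2 \leq 1$ on an $m$-dimensional cube is only bounded by something like $\binom{m}{\leq d}^{1/2}$, which here is $(nR)^{\Theta(d)}$. Since $n$ is the Label Cover alphabet size and is completely decoupled from $R$ (and can be arbitrarily large), the factor $\|T_{1/2}\Gk\|_\infty^{k\epsilon/(1+\epsilon)}$ cannot be absorbed into $2^{O(k)}$; the noise $T_{1/2}$ does not help here because its smoothing is only $2$-to-$5$ hypercontractive, not $2$-to-$\infty$. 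The paper sidesteps this by using the pointwise inequality $|t|^{1+\epsilon} \leq |t| + t^2$, which turns $\E[|T_{1/2}\Gk|^{1+\epsilon}]$ into a $1$-norm plus a squared $2$-norm; the $2$-norm term is dimension-free because Parseval transfers it back to $[R]^n$ exactly (no invariance principle needed for that piece), where it is bounded by $\E[g] = 1/R$ using $g \in [0,1]$. Equivalently, you can salvage your interpolation argument by interpolating between the $1$-norm and the $2$-norm rather than the $\infty$-norm: $\|G\|_{1+\epsilon} \leq \|G\|_1^{(1-\epsilon)/(1+\epsilon)}\|G\|_2^{2\epsilon/(1+\epsilon)}$ with $\|G\|_1 = O(1/R)$ and $\|G\|_2 \leq 1/\sqrt{R}$ gives the same $2^{O(k)}/R^k$ bound. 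With that replacement, your argument matches the paper's.
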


The theorem will be proved in Subsection~\ref{subsec:red-ug}. Before that, we first prove an inequality that is the heart of our soundness analysis in Subsection~\ref{subsec:noisestability}.

\subsection{Parameters} \label{subsec:params}

We use the following parameters throughout, which we list for convenience here:
\begin{itemize}
    \item Correlation: $\rho = 1/\sqrt{(k - 1)\log R}$
    \item Degree: $d = 10k\log R$
    \item Low-degree influences: $\delta = 1/(R^{10 + 100 k \log(R)})$
\end{itemize}

\subsection{Hypercontractivity for Noisy Low-Influence Functions} \label{subsec:noisestability}
Here we show a version of hypercontractivity for noisy low-influence functions
over large domains.
Although hypercontractivity does not hold in general for
noisy functions over large domains, it turns out to hold in our setting
of high-noise and low-influences.
The main technical idea is to use the Invariance Principle
to reduce functions over larger domains to boolean functions,
then apply boolean hypercontractivity.

\begin{lemma}[Main Lemma]
\label{mainlemma}
Let $g: [R]^n \to [0, 1]$ be any function with $\E_{x \in [R]^n}[g(x)] = 1/R$.
Then, for our choice of parameters $\rho, d, \delta$:
If $Inf_i^{\leq d}[g] \leq \delta$ for all $i$,
then
$$\E_{x \in [R]^n}[(T_\rho g(x))^k] \leq 2^{O(k)}/R^k.$$
\end{lemma}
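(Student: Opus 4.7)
The plan is to bound $\E[(T_\rho g)^k]$ in three stages. First, reduce to the low-degree part $T_\rho\gk$ using Lipschitz-ness of a test function. Second, transfer the resulting expectation to the boolean analog $T_\rho\analog{\gk}$ on $\{\pm 1\}^{nR}$ via the Invariance Principle (Lemma~\ref{ourinv}). Third, on the boolean side, apply hypercontractivity to bound the $k$-norm of $T_\rho\analog{\gk}$ by a smaller $(1+\epsilon)$-norm, which I then relate back to $\E[g] = 1/R$.

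In more detail: since $g \in [0,1]$ we have $T_\rho g \in [0,1]$, hence $(T_\rho g)^k = \psi_k(T_\rho g)$, where $\psi_k$ is the clipped power function of Lemma~\ref{ourinv}; it is $k$-Lipschitz. Combined with $||T_\rho g^{>d}||_2 \leq \rho^d$ this gives $|\E[\psi_k(T_\rho g)] - \E[\psi_k(T_\rho\gk)]| \leq k\rho^d$, which is negligible for our parameters. Applying Lemma~\ref{ourinv} with $f = T_\rho g$ (whose low-degree influences are dominated by those of $g$) and $\psi = \psi_k$, together with the fact that $T_\rho$ commutes both with degree truncation and with passing to the boolean analog, yields
\[
\bigl|\E_{[R]^n}[\psi_k(T_\rho\gk)] - \E_{\{\pm 1\}^{nR}}[\psi_k(T_\rho\analog{\gk})]\bigr| \leq O(1/R^k).
\]
The pointwise bound $\psi_k(t) \leq |t|^k$ then gives $\E[\psi_k(T_\rho\analog{\gk})] \leq ||T_\rho\analog{\gk}||_k^k$. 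Writing $T_\rho = T_{2\rho}\circ T_{1/2}$ and invoking the hypercontractive estimate~\eqref{eq:ourhyper} yields
\[
||T_\rho\analog{\gk}||_k \leq ||T_{1/2}\analog{\gk}||_{1+\epsilon}, \qquad \epsilon = 4/\log R.
\]

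The main obstacle — and the place that requires a genuinely new idea — is bounding this $(1+\epsilon)$-norm using only the known $1$-norm $\E[g]=1/R$, since there is no convenient $\ell_\infty$ bound on $\analog{\gk}$ through which to interpolate. The trick I would use is the elementary pointwise inequality $|t|^{1+\epsilon} \leq |t| + t^2$ (valid whenever $\epsilon \in [0,1]$), which replaces the $(1+\epsilon)$-norm bound by a $1$-norm bound and a squared $2$-norm bound on $T_{1/2}\analog{\gk}$. The $2$-norm piece is immediate: by Parseval and $g \in [0,1]$,
\[
||T_{1/2}\analog{\gk}||_2^2 \leq ||g||_2^2 \leq \E[g] = 1/R.
\]
For the $1$-norm piece I apply Lemma~\ref{ourinv} once more, now with $f = T_{1/2} g$ and $\psi = \psi_1 = |\cdot|$, which gives
\[
\E[|T_{1/2}\analog{\gk}|] \leq \E[|T_{1/2}\gk|] + O(1/R^k) \leq \E[T_{1/2} g] + ||T_{1/2} g^{>d}||_2 + O(1/R^k) = O(1/R),
\]
where the middle inequality uses $T_{1/2} g \geq 0$ together with $T_{1/2}\gk = T_{1/2} g - T_{1/2} g^{>d}$, and where $||T_{1/2} g^{>d}||_2 \leq 2^{-d}$ is exponentially small.

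Combining these bounds gives $||T_{1/2}\analog{\gk}||_{1+\epsilon}^{1+\epsilon} = O(1/R)$. Because $\epsilon = \Theta(1/\log R)$, the exponent loss $R^{\epsilon/(1+\epsilon)} = O(1)$ costs only a constant factor, so $||T_{1/2}\analog{\gk}||_{1+\epsilon} = O(1/R)$. Raising to the $k$-th power yields $||T_\rho\analog{\gk}||_k^k \leq (O(1/R))^k = 2^{O(k)}/R^k$, and collecting the $O(1/R^k)$ error terms from the two Invariance Principle applications gives the claimed bound $\E[(T_\rho g)^k] \leq 2^{O(k)}/R^k$.
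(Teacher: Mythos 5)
Your proof is correct and follows essentially the same route as the paper's: reduce to the low-degree part via the Lipschitz property of $\psi_k$, pass to the boolean analog via the Invariance Principle, apply hypercontractivity after splitting $T_\rho = T_{2\rho}T_{1/2}$, and — this is the key step you correctly identified — replace the $(1+\epsilon)$-norm by a $1$-norm plus a squared $2$-norm using $|t|^{1+\epsilon} \leq |t| + t^2$, before invoking the Invariance Principle once more for the $1$-norm piece and Parseval plus $g\in[0,1]$ for the $2$-norm piece. The only cosmetic differences are that the paper uses the one-sided inequality $\psi_k(a+b)\leq\psi_k(a)+k|b|$ where you use the two-sided Lipschitz bound, and the paper phrases the final exponent bookkeeping as $R^{-k/(1+\epsilon)}\leq R^{-k(1-\epsilon)}$ rather than absorbing the $R^{\epsilon/(1+\epsilon)}$ factor into the $O(1/R)$ before raising to the $k$-th power; both are equivalent.
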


Before we present the full proof, we outline the high-level steps below.
Let $f = T_\rho g$, and define boolean analogs $G = \analog{g}$, and $F = \analog{f}$.
Let $\psi_k: \R \to \R$ be defined as in Lemma~\ref{ourinv}. Then,

\begin{align}
\E_{x \in [R]^n}[f(x)^k]
&\approx \E[\psi_k(\fk(x))] \label{eq:step1}\\
(\text{Lemma~\ref{ourinv}: Invariance Principle})\quad &\approx \E_{y \in \{\pm 1\}^{nR}}[\psi_k(\Fk(y))]
\label{eq:step2} \\
(\text{Definition of } \psi_k)\quad &\leq ||\Fk||_k^k \\
(\text{Definition of } F)\quad &= ||T_\rho \Gk||_k^k \\
&= ||T_{2\rho}T_{1/2} \Gk||_k^k\\
(\text{Hypercontractivity, for small $\epsilon$})\quad
&\leq ||T_{1/2}\Gk||_{1+\epsilon}^k \label{eq:stephyper}\\
(\text{Invariance, etc.})\quad
&\approx 2^{O(k)} || g ||_{1}^k \\
(\text{Since } \E[|g|] = 1/R)\quad &= 2^{O(k)}/R^k. \label{eq:eg}
\end{align}

\proof

To establish line ~\eqref{eq:step1}, first notice that
\beq
\psi_k(f(x)) = \psi_k(\fk(x) + f^{> d}(x)) \leq \psi_k(\fk(x)) + k|f^{> d}(x)|
\eeq
where the last inequality is because the function $\psi_k$ is $k$-Lipschitz.

Moreover, since $g(x) \in [0, 1]$, we have $f(x) \in [0, 1]$, so
\beq
f(x)^k = \psi_k(f(x)).
\eeq

Thus,
\bal
\E[ f(x)^k ] &= \E[\psi_k(f(x))] \\
&\leq \E [\psi_k(\fk(x))] + k\E[ |f^{> d}(x)| ]\\
&= \E [\psi_k(\fk(x))] + k||f^{> d}||_1\\
&\leq \E [\psi_k(\fk(x))] + k||f^{> d}||_2.\\
\eal
And we can bound the 2-norm of $f^{> d}$, since $f$ is noisy, we have
\bal
||f^{> d}||_2^2
= ||T_\rho g^{> d}||_2^2
\leq \rho^{2d}
\leq O(1/R^{2k}).
\eal
The last inequality comes from our choice of $\rho$ and $d$.

So line ~\eqref{eq:step1} is established:
\beq
\E[f(x)^k] \leq \E[\psi_k(\fk(x))] + O(k/R^k).
\eeq

Line ~\eqref{eq:step2}
follows directly from our version of the Invariance Principle (Lemma~\ref{ourinv}),
for the function $\psi_k$:
\beq
\E_{x \in [R]^n}[\psi_k(\fk(x))] \leq \E_{y \in \{\pm 1\}^{nR}}[\psi_k(\Fk(y))] + O(1/R^k).
\eeq

We can now rewrite $\E_{y \in \{\pm 1\}^{nR}}[\psi_k(\Fk(y))]$ as
\begin{align}
    \E_{y \in \{\pm 1\}^{nR}}[\psi_k(\Fk(y))] &\leq \E_{y \in \{\pm 1\}^{nR}}[|\Fk(y)|^k] \\
    &= ||\Fk||_k^k \\
    &= ||T_\rho \Gk||_k^k \\
    &= ||T_{2 \rho}T_{1/2} \Gk||_k^k. \\
\end{align}

Now, from the Hypercontractivity Theorem, Equation \eqref{eq:ourhyper}, we have
\beq
||T_{2\rho}T_{1/2} \Gk||_k \leq ||T_{1/2}\Gk||_{1+\epsilon}
\label{eq:hyper_use}
\eeq
for $\epsilon = 4/\log{R}$.
This establishes line~\eqref{eq:stephyper}:
\bal
||T_{2 \rho}T_{1/2} \Gk||_k^k
\leq
||T_{1/2}\Gk||_{1+\epsilon}^k
&= \E[|T_{1/2} \Gk(y)|^{1+\epsilon}]^{k/(1+\epsilon)}.
\label{eq:hyperproof}
\eal

To show the remaining steps, we will apply the Invariance Principle once more.
Notice that for all $t \in \R: |t|^{1+\epsilon} \leq |t| + t^2$. Hence, we can
derive the following bound:
\bal
\E[|T_{1/2} \Gk(y)|^{1+\epsilon}]
&\leq
\E[|T_{1/2} \Gk(y)|] + \E[(T_{1/2} \Gk(y))^2]\\
(\text{Matching Fourier expansion})\quad
&= \E[|T_{1/2} \Gk(y)|] + \E[(T_{1/2} \gk(y))^2]\\
(\text{Lemma~\ref{ourinv}, Invariance Principle})\quad
&\leq \E[|T_{1/2} \gk(x)|] + \E[(T_{1/2} \gk(x))^2] + O(1/R^k).
\label{eq:invar2}
\eal
Here we applied our Invariance Principle (Lemma~\ref{ourinv}) for the
function $\psi_1$ as defined in Lemma~\ref{ourinv}.
We will bound each of the expectations on the RHS, using the fact that $g$ is
balanced, and $T_{1/2} g$ is noisy.

First,
\bal
\E[|T_{1/2} \gk(x)|]
&=\E[|T_{1/2} g(x) - T_{1/2}g^{> d}(x)|]\\
(\text{Triangle Inequality})\quad
&\leq \E[|T_{1/2} g(x)|] + \E[|T_{1/2}g^{> d}(x)|]\\
&= ||g||_1 + ||T_{1/2} g^{>d}||_1\\
&\leq ||g||_1 + ||T_{1/2} g^{>d}||_2\\
&\leq 1/R + (1/2)^d \\
(\text{By our choice of } d)\quad
&= O(1/R).
\eal

Second,
\bal
\E[(T_{1/2} \gk(x))^2]
&= \sum_{s \in [R]^n, |s| \leq d} (1/2)^{2|s|} \hat{g}(s)^2\\
&\leq \sum_{s \in [R]^n} (1/2)^{2|s|} \hat{g}(s)^2\\
&= \E[(T_{1/2} g(x))^2]\\
(\text{Since } g \in [0, 1])\quad
&\leq \E[T_{1/2} g(x)]\\
&= \E[g(x)] = 1/R.
\eal

Finally, plugging these bounds into
\eqref{eq:invar2},
we find:
\bal
||T_{1/2}\Gk||_{1+\epsilon}^k
&= \E[|T_{1/2} \Gk(y)|^{1+\epsilon}]^{k/(1+\epsilon)}\\
&\leq ( O(1/R) )^{k/(1+\epsilon)}\\
&= 2^{O(k)}/R^{k/(1+\epsilon)}\\
&\leq 2^{O(k)}/R^{k(1-\epsilon)}\\
(\text{Recall } \epsilon=4/\log R)\quad
&= 2^{O(k)}/R^{k}.
\eal
This completes the proof of the main lemma. \qed

\subsection{Reducing Unique Label Cover to {\sc Max $k$-CSP$_R$}} \label{subsec:red-ug}
Here we reduce unique games to {\sc Max $k$-CSP$_R$}.
We will construct a PCP verifier that reads $k$ symbols of the proof
(with an alphabet of size $R$) with the following properties:
\begin{itemize}
\item {\bf (Completeness)} If the unique game has value at least $\zeta$, then the verifier accepts an honest proof with probability at least $c = 1/((\log R)^{k/2}2^{O(k \log k)})$.
\item {\bf (Soundness)} If the unique game has value at most $\gamma = 2^{O(k)}\delta^2/(4d R^k)$, then the verifier accepts any (potentially cheating) proof with probability at most $s = 2^{O(k)}/R^{k - 1}$.
\end{itemize}
Since each symbol in the proof can be viewed as a variable and each accepting predicate of the verifier can be viewed as a constraint of {\sc Max $k$-CSP}$_R$, assuming the One-sided UGC, this PCP implies NP-hardness of approximating {\sc Max $k$-CSP}$_R$ of factor $s/c = 2^{O(k \log k)}(\log R)^{k/2}/R^{k-1}$ and, hence, establishes our Theorem~\ref{thm:osugmain}.

\subsubsection{The PCP}
Given a unique game $(V, W, E, n, \{\pi_e\}_{e \in E})$,
the proof is the truth-table of a function $h_{w}: [R]^n \to [R]$ for each vertex $w \in W$.
By folding, we can assume $h_w$ is balanced, i.e.
$h_w$ takes on all elements of its range with equal probability:
$\Pr_{x \in [R]^n}[h_w(x) = i] = 1/R$ for all $i \in [R]$.
\footnote{
    More precisely, if the truth-table provided is of some function $\t h_w:
    [R]^n \to [R]$, we define the ``folded'' function $h_w$ as
    $h_w(x_1, x_2, x_3, \dots x_n) := \t h_w(x - (x_1, x_1, \dots, x_1)) + x_1$,
    where the $\pm$ is over mod $R$.
    Notice that the folded $h_w$ is balanced, and also that folding does not
    affect dictator functions.
    Thus we define our PCP in terms of $h_w$, but simulate queries to $h_w$
    using the actual proof $\t h_w$.
}

Notationally, for $x \in [R]^n$, let $(x \circ \pi)$ denote permuting the
coordinates of $x$ as: $(x \circ \pi)_i = x_{\pi(i)}$.
Also, for an edge $e = (v, w)$, we write $\pi_e = \pi_{v, w}$, and define
$\pi_{w, v} = \pi_{v, w}^{-1}$.

The verifier picks a uniformly random vertex $v \in V$,
and $k$ independent uniformly random neighbors of $v$: $w_1, w_2, \dots, w_k \in W$.
Then pick $z \in [R]^n$ uniformly at random,
and let $\i{x}{1}, \i{x}{2}, \dots, \i{x}{k}$ be independent $\rho$-correlated noisy copies of $z$
(each coordinate $x_i$ chosen as equal to $z_i$ w.p. $\rho$, or uniformly at random otherwise).
The verifier accepts if and only if
\begin{equation}
h_{w_1}(\i{x}{1} \circ \pi_{w_1, v})
=
h_{w_2}(\i{x}{2} \circ \pi_{w_2, v})
=
\cdots
=
h_{w_k}(\i{x}{k} \circ \pi_{w_k, v}).
\end{equation}

To achieve the desired hardness result, we pick $\rho = 1/\sqrt{(k - 1)\log R}$.

\subsubsection{Completeness Analysis}

First, note that that we can assume without loss of generality that the graph is regular on $V$ side.\footnote{See, for instance, Lemma 3.4 in~\cite{KR08}.} Let the degree of each vertex in $V$ be $d$.

Suppose that the original unique game has an assignment of value at least
$\zeta$. Let us call this assignment $\varphi$.
The honest proof defines $h_w$ at each vertex $w \in W$ as the long code
encoding of this assignment, i.e., $h_w(x) = x_{\varphi(w)}$.
We can written the verifier acceptance condition as follows:
\begin{align}
\text{The verifier accepts} &\Leftrightarrow
h_{w_1}(\i{x}{1} \circ \pi_{w_1, v})
=
\cdots
=
h_{w_k}(\i{x}{k} \circ \pi_{w_k, v})
 \\
&\Leftrightarrow
(\i{x}{1} \circ \pi_{w_1, v})_{\varphi(w_1)}
=
\cdots
=
(\i{x}{k} \circ \pi_{w_k, v})_{\varphi(w_k)}
 \\
&\Leftrightarrow
(\i{x}{1})_{\pi_{w_1, v}(\varphi(w_1))}
=
\cdots
=
(\i{x}{k})_{\pi_{w_k, v}(\varphi(w_k))}.
\end{align}

Observe that, if the edges $(v, w_1), \dots, (v, w_k)$ are satisfied by $\varphi$,
then $\pi_{w_1, v}(\varphi(w_1)) = \dots = \pi_{w_k, v}(\varphi(w_k)) = \varphi(v)$.
Hence, if the aforementioned edges are satisfied and $\i{x}{1}, \dots, \i{x}{k}$
are not perturbed at coordinate $\varphi(v)$, then
$(\i{x}{1})_{\pi_{w_1, v}(\varphi(w_1))} = \cdots = (\i{x}{k})_{\pi_{w_k, v}(\varphi(w_k))}$.

For each $u \in V$, let $s_u$ be the number of satisfied edges touching $u$. Since $w_1, \dots, w_k$ are chosen from the neighbors of $v$ independently from each other, the probability that the edges $(v, w_1), (v, w_2), \dots, (v, w_k)$ are satisfied can be bounded as follows:
\begin{align}
&\Pr_{v, w_1, \dots, w_k}[(v, w_1), \dots, (v, w_k) \text{ are satisfied}] \\
&= \sum_{u \in V}\Pr_{w_1, \dots, w_k}[(v, w_1), \dots, (v, w_k) \text{ are satisfied} \mid v = u]\Pr[v = u] \\
&= \sum_{u \in V} \left(s_u/d\right)^k \Pr[v = u] \\
&= \E_{u \in V}\left[\left(s_u/d\right)^k\right] \\
&\geq \E_{u \in V}\left[s_u/d\right]^k.
\end{align}

Notice that $\E_{u \in V}\left[s_u/d\right]$ is exactly the value of $\varphi$, which is at least $\zeta$. As a result, $$\Pr_{v, w_1, \dots, w_k}[(v, w_1), \dots, (v, w_k) \text{ are satisfied}] \geq \zeta^k.$$

Furthermore, it is obvious that the probability that $x_1, \dots, x_k$ are not perturbed at the coordinate $\varphi(v)$ is $\rho^k$. As a result, the PCP accepts with probability at least $\zeta^k \rho^k$. When $\rho = 1/\sqrt{(k - 1)\log R}$ and $\zeta$ is a constant not depending on $k$ and $R$, the completeness is $1/((\log R)^{k/2}2^{O(k \log k)})$.

\subsubsection{Soundness Analysis}
Suppose that the unique game has value at most $\gamma = 2^{O(k)}\delta^2/(4d R^k)$.
We will show that the soundness is $2^{O(k)}/R^{k - 1}$.

Suppose for the sake of contradiction that the probability that the verifier accepts
$\Pr[accept] > t = 2^{\Omega(k)}/R^{k - 1}$
where $\Omega(\cdot)$ hides some large enough constant.

Let $h_w^i(x) : [R]^n \to \{0, 1\}$ be the indicator function for $h_w(x) = i$ and
let $x \pcorr z$ denote that $x$ is a $\rho$-correlated copy of $z$. We have
\begin{align}
\Pr[accept] &= \Pr[
h_{w_1}(\i{x}{1} \circ \pi_{w_1, v})
=
\cdots
=
h_{w_k}(\i{x}{k} \circ \pi_{w_k, v})
] \\
&= \sum_{i \in [R]}
\Pr[
i = h_{w_1}(\i{x}{1} \circ \pi_{w_1, v})
=
\cdots
=
h_{w_k}(\i{x}{k} \circ \pi_{w_k, v})
] \\
&= \sum_{i \in [R]}
\E[
h^i_{w_1}(\i{x}{1} \circ \pi_{w_1, v})
\cdots
h^i_{w_k}(\i{x}{k} \circ \pi_{w_k, v})
] \\
(\text{Since } w_i \text{'s are independent given } v)
&= \sum_{i \in [R]}
\E\left[
\E_{w_1}[
h^i_{w_1}(\i{x}{1} \circ \pi_{w_1, v})
]
\cdots
\E_{w_k}[
h^i_{w_k}(\i{x}{k} \circ \pi_{w_k, v})
]
\right].
\end{align}
Now define $g^i_v : [R]^n \to [0, 1]$ as
\begin{equation}
g^i_v(x) = \E_{w \sim v}[h^i_{w}(x \circ \pi_{w, v})]
\end{equation}
where $w \sim v$ denotes neighbors $w$ of $v$.

We can rewrite $\Pr[accept]$ as follows:
\begin{align}
\Pr[accept]
&= \sum_{i \in [R]}
\E[ g^i_v(\i{x}{1}) g^i_v(\i{x}{2}) \cdots g^i_v(\i{x}{k}) ] \\
(\text{Since } \i{x}{j} \text{'s are independent given } z)
&= \sum_{i \in [R]}
\E\left[ \E_{x \pcorr z}[ g^i_v(x) ] ^k \right] \\
&= \sum_{i \in [R]}\E_{v, z}[ (T_\rho g^i_v(z))^k ]\\
&= \E_v\left[ \sum_{i \in [R]}\E_{z}[ (T_\rho g^i_v(z))^k \right]. \label{eqn:evz}
\end{align}

Next, notice that
\begin{align}
\sum_{i \in [R]}\E_{z}[ (T_\rho g^i_v(z))^k ]
&=
\E_z\left[ \sum_{i \in [R]} (T_\rho g^i_v(z))^k \right] \\
&\leq
\E_z\left[ \left(\sum_{i \in [R]} T_\rho g^i_v(z)\right)^k \right] \\
&=
\E_z\left[ \left(T_\rho \sum_{i \in [R]} g^i_v(z)\right)^k \right] \\
&=
\E_z[ (T_\rho 1)^k ] = 1.
\end{align}

Therefore, since $\Pr[accept] > t$,
 by~\eqref{eqn:evz}, at least $t/2$ fraction of vertices $v \in V$ have
\begin{equation}
\sum_{i \in [R]}\E_{z}[ (T_\rho g^i_v(z))^k ] \geq t/2.
\end{equation}

For these ``good'' vertices, there must exist some $i \in [R]$ for which
\begin{equation}
\E_{z}[ (T_\rho g^{i}_v(z))^k ] \geq t/(2R).
\end{equation}

Then for ``good'' $v$ and $i$ as above,
\begin{equation}
\E_{z}[ (T_\rho g^{i}_v(z))^k ] > 2^{\Omega(k)}/R^k.
\end{equation}

By Lemma~\ref{mainlemma} (Main Lemma), this means $g_v^{i}$ has some coordinate $j$
for which
\begin{equation}
Inf_j^{\leq d}[g^{i}_v] > \delta
\end{equation}
for our choice of $d, \delta$ as defined in Subsection~\ref{subsec:params}.
Pick this $j$ as the label of vertex $v \in V$.

Now to pick the label of a vertex $w \in W$, define
the candidate labels as
\beq
Cand[w] = \{j \in [n] : \exists~i \in [R] \text{ s.t. }
Inf_j^{\leq d}[h^i_w] \geq \delta/2 \}.
\eeq
Notice that
\beq
\sum_{j \in [n]} Inf_j^{\leq d}[h^i_w]
= \sum_{s \in [R]^n: ~|s| \leq d} |s| \hat h^i_w(s)^2
\leq d \sum_{s: |s| > 0}\hat h^i_w(s)^2
= d ~Var[h^i_w] \leq d.
\eeq

So for each $i \in [R]$, the projection $h_w^i$ can have at most $2d/\delta$ coordinates
with influence $\geq \delta/2$.
Therefore the number of candidate labels is bounded:
\beq
|Cand[w]| \leq {2dR}/\delta.
\eeq

Now we argue that picking a random label in $Cand[w]$ for $w \in W$
is in expectation a good decoding.
We will show that if we assigned label $j$ to a ``good'' $v \in V$,
then $\pi_{v,w}(j) \in Cand[w]$ for a constant fraction of neighbors
$w \sim v$. Note here that $\pi_{v,w} = \pi^{-1}_{w,v}$.

First, since $g^i_v(x) = \E_{w \sim v}[h^i_{w}(x \circ \pi_{w, v})]$, the Fourier transform of $g_v^i$ is related to the Fourier transform of the long code labels $h_w^i$ as
\begin{align}
\hat g_v^i(s) &= \E_{w \sim v}[\hat h^i_{w}(s \circ \pi_{w, v})].
\end{align}

Hence, the influence $Inf_j^{\leq d}[g^i_v]$ of being large implies
the expected influence $In\fk_{\pi_{v,w}^{-1}(j)}[h_w^i]$
of its neighbor labels $w \sim v$ is also large as formalized below.

\begin{align}
\delta < Inf_j^{\leq k}[g^{i}_v]
&= \sum_{\substack{s \in [R]^n \\ |s| \leq k, s_j \neq 1}} \hat g^{i}_v(s)^2 \\
&= \sum \E_{w \sim v}[\hat h^{i}_{w} (s \circ \pi_{w,v})]^2\\
&\leq \sum \E_{w \sim v}[\hat h^{i}_{w} (s \circ \pi_{w,v})^2] \\
&= \E_{w \sim v}[\sum_{\substack{s \in [R]^n \\ |s| \leq k, s_j \neq 1}}
\hat h^{i}_{w} (s \circ \pi_{w,v})^2]\\
&= \E_{w \sim v}[\sum_{\substack{s \in [R]^n \\ |s| \leq k, s_{\pi_{w,v}^{-1}(j)} \neq 1}}
\hat h^{i}_{w} (s)^2]\\
(\text{Since } \pi_{v,w} = \pi^{-1}_{w,v})
&= \E_{w \sim v}[\sum_{\substack{s \in [R]^n \\ |s| \leq k, s_{\pi_{v,w}(j)} \neq 1}}
\hat h^{i}_{w} (s)^2]\\
&= \E_{w \sim v}[In\fk_{\pi_{v,w}(j)}[h_w^i]]
\end{align}

Therefore, at least $\delta / 2$ fraction of neighbors $w \sim v$
must have
$
In\fk_{\pi_{v,w}(j)}[h_w^i] \geq \delta / 2
$, and so $\pi_{v, w}(j) \in Cand[w]$ for at least $\delta / 2$ fraction of
neighbors of ``good'' vertices $v$.

Finally, recall that at least $(t/2)$ fraction of vertices $v \in V$
are ``good''. These vertices have at least $(\delta / 2)$
fraction of neighbors $w \in W$ with high-influence labels
and the matching label $w \in W$ is picked with probability at least $\delta/(2dR)$.
Moreover, as stated earlier, we can assume that the graph is regular on $V$ side.
Hence, the expected fraction of edges satisfied by this decoding is
at least
\beq
(t/2)(\delta/2) (\delta/2dR) = t\delta^2/(4dR) = 2^{\Omega(k)}\delta^2/(4d R^k) > \gamma,
\eeq
which contradicts our assumption that the unique game has value at most $\gamma$. Hence, we can conclude that the soundness is at most $2^{O(k)}/R^{k - 1}$ as desired.

\section{$\Omega(\log R/R^{k - 1})$-Approximation Algorithm for {\sc Max $k$-CSP$_R$}} \label{sec:approx}

Instead of just extending the KKT algorithm to work with {\sc Max $k$-CSP}s, we will show a more generalized statement that \emph{any} algorithm that approximates {\sc Max CSP}s with small arity can be extended to approximate {\sc Max CSP}s with larger arities. In particular, we show how to extend any $f(R)/R^{k'}$-approximation algorithm for {\sc Max $k'$-CSP$_R$} to an $(f(R)/2^{O(\min\{k', k - k'\})})/R^{k}$-approximation algorithm for {\sc Max $k$-CSP$_R$} where $k > k'$.

Since the naive algorithm that assigns every variable randomly has an approximation ratio of $1/R^{k}$, we think of $f(R)$ as the advantage of algorithm $A$ over the randomized algorithm. From this perspective, our extension lemma preserves the advantage up to a factor of $1/2^{O(\min\{k', k - k'\})}$.

The extension lemma and its proof are stated formally below.

\begin{lemma} \label{lem:ext}
Suppose that there exists a polynomial-time approximation algorithm $A$ for {\sc Max $k'$-CSP$_R$} that outputs an assignment with expected value at least $f(R)/R^{k'}$ times the optimum. For any $k > k'$, we can construct a polynomial-time approximation algorithm $B$ for {\sc Max $k$-CSP$_R$} that outputs an assignment with expected value at least $(f(R)/2^{O(\min\{k', k - k'\})})/R^{k}$ times the optimum.
\end{lemma}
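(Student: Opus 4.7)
The plan is to reduce the given $k$-ary instance $I = (\mathcal{X}, \mathcal{C})$ to a $k'$-ary instance $I'$ by projecting each $k$-ary predicate onto every $k'$-subset of its variables, invoke $A$ on $I'$, and then lift the resulting assignment back to $I$ by independently mixing it with a uniform random assignment.

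Concretely, for every constraint $(W_i, S_i, P_i) \in \mathcal{C}$ and every $T \subseteq S_i$ with $|T| = k'$, I would create a constraint $(W_i/\binom{k}{k'},\, T,\, P_{i,T})$ in $I'$, where $P_{i,T}(\sigma) = 1$ iff some extension of $\sigma$ to $S_i$ satisfies $P_i$. The new weights still sum to $1$, and since any assignment satisfying $P_i$ automatically satisfies every $P_{i,T}$, evaluating an optimum of $I$ on $I'$ yields $\mathrm{OPT}(I') \ge \mathrm{OPT}(I)$. Now run $A$ on $I'$ to produce $\varphi'$ of expected value at least $(f(R)/R^{k'})\,\mathrm{OPT}(I')$, and let $B$ output the assignment $\varphi$ defined by setting, independently for each $x \in \mathcal{X}$, $\varphi(x) := \varphi'(x)$ with probability $p$ and drawing $\varphi(x)$ uniformly from $[R]$ with probability $1-p$.

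The main analytic step is a per-constraint expectation bound. For a fixed constraint $C_i$, condition on the random subset $T \subseteq S_i$ of variables whose value was copied from $\varphi'$, and restrict attention to outcomes with $|T| = k'$ (probability $p^{k'}(1-p)^{k-k'}$ for each such $T$). On this event, whenever $P_{i,T}(\varphi'|_T) = 1$ the definition of $P_{i,T}$ guarantees at least one satisfying extension on $S_i \setminus T$, so the independent uniform draws on those $k - k'$ variables satisfy $P_i$ with probability at least $R^{-(k-k')}$. Summing with weights over $i$ and $T$ and then unfolding the value of $\varphi'$ on $I'$ gives
\[
\mathbb{E}[\mathrm{val}_I(\varphi)] \;\ge\; \frac{p^{k'}(1-p)^{k-k'}\binom{k}{k'}}{R^k}\cdot f(R)\cdot \mathrm{OPT}(I).
\]

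Finally, I would pick $p := k'/k$, so that $\binom{k}{k'} p^{k'}(1-p)^{k-k'}$ is the mode probability of $\mathrm{Bin}(k, k'/k)$; by a direct Stirling estimate this is $\Theta(1/\sqrt{\min(k', k-k')})$, and in particular at least $2^{-O(\min\{k', k-k'\})}$, delivering the claimed approximation ratio. The only mild loose end is polynomial running time: computing each $P_{i,T}$ costs at most $R^{k-k'}$ and $I'$ has $\binom{k}{k'} \le 2^{k}$ times as many constraints as $I$, both polynomial in the natural input size, so $B$ is polynomial whenever $A$ is. I expect no serious obstacle; the argument is essentially averaging over $k'$-subsets combined with a random-extension trick, with Stirling used only at the very end to compare the mode of a binomial against $2^{-O(\min\{k',k-k'\})}$.
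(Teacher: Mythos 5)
Your proof is correct, and while it follows the same high-level template as the paper's (project to $k'$-ary constraints, run $A$, then mix the output with a uniform-random assignment coordinate-wise), the projection step is set up differently in a way worth noting. The paper creates $\binom{k}{k'}R^{k-k'}$ sub-constraints per original constraint, one for each pair of a $k'$-subset $S'$ and a \emph{fixed} partial assignment $\tau \in [R]^{S\setminus S'}$, with predicate $P'(\psi)=P(\psi\circ\tau)$; this gives $\mathrm{OPT}(I') \geq \mathrm{OPT}(I)/R^{k-k'}$ and then an equality in the lifting step (the random draws on $S\setminus S'$ exactly average over $\tau$). You instead use the \emph{existential} projection $P_{i,T}(\sigma)=\mathbf{1}[\exists\,\text{extension of }\sigma\text{ satisfying }P_i]$, yielding a smaller instance with only $\binom{k}{k'}$ sub-constraints per original, $\mathrm{OPT}(I')\geq\mathrm{OPT}(I)$ with no loss, and the $R^{-(k-k')}$ factor reappearing as the probability of hitting a witness extension in the lifting step. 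The two bookkeeping choices trade the same $R^{k-k'}$ factor between the projection and the lifting, and both arrive at the same $\binom{k}{k'}p^{k'}(1-p)^{k-k'}\cdot f(R)/R^k$ bound with $p=k'/k$. Your final estimate via Stirling gives the sharper $\Theta(1/\sqrt{\min\{k',k-k'\}})$, strictly better than the paper's $1/2^{2\min\{k',k-k'\}}$ from Bernoulli's inequality, though the lemma statement only needs $2^{-O(\min\{k',k-k'\})}$, so both suffice. Your attention to the cost of computing each $P_{i,T}$ (exhaustive over $R^{k-k'}$ extensions) is appropriate and keeps the reduction polynomial.
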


\begin{proof}
The main idea of the proof is simple. We turn an instance of {\sc Max $k$-CSP$_R$} to an instance of {\sc Max $k'$-CSP$_R$} by constructing $\binom{k}{k'}R^{k - k'}$ new constraints for each original constraint; each new constraint is a projection of the original constraint to a subset of variables of size $k'$. We then use $A$ to solve the newly constructed instance. Finally, $B$ simply assigns each variable with the assignment from $A$ with a certain probability and assign it randomly otherwise.

For convenience, let $\alpha$ be $\frac{k - k'}{k}$. We define $B$ on input $(\mathcal{X}, \mathcal{C})$ as follows:
\begin{enumerate}
\item Create an instance $(\mathcal{X}, \mathcal{C}')$ of {\sc Max $k'$-CSP$_R$} with the same variables and, for each $C = (W, S, P) \in \mathcal{C}$ and for every subset $S'$ of $S$ with $|S'| = k'$ and every $\tau \in [R]^{S - S'}$, create a constraint $C^{S', \tau} = (W', S', P')$ in $\mathcal{C}'$ where $W'= \frac{W}{\binom{k}{k'}R^{k - k'}}$ and $P': [R]^{S'} \rightarrow \{0, 1\}$ is defined by $$P'(\psi) = P(\psi \circ \tau).$$ Here $\psi \circ \tau$ is defined as follows:
\begin{align*}
\psi \circ \tau(x) =
\begin{cases}
\psi(x) & \text{if } x \in S', \\
\tau(x) & \text{otherwise.}
\end{cases}
\end{align*}
\item Run $A$ on input $(\mathcal{X}, \mathcal{C}')$. Denote the output of $A$ by $\varphi_A$.
\item \label{step:rand-asg} For each $x \in \mathcal{X}$, with probability $\alpha$, pick $\varphi_B(x)$ randomly from $[R]$. Otherwise, let $\varphi_B(x)$ be $\varphi_A(x)$.
\item Output $\varphi_B$.
\end{enumerate}

We now show that $\varphi_B$ has expected value at least $(f(R)/2^{O(\min\{k', k - k'\})})/R^{k}$ times the optimum.

First, observe that the optimum of $(\mathcal{X}, \mathcal{C}')$ is at least $1/R^{k - k'}$ times that of $(\mathcal{X}, \mathcal{C})$. To see that this is true, consider any assignment $\varphi: \mathcal{X} \to [R]$ and any constraint $C = (W, S, P)$. Its weighted contribution in $(\mathcal{X}, \mathcal{C})$ is $W P(\varphi|_S)$. On the other hand, $\frac{W}{\binom{k}{k'}R^{k - k'}} P(\varphi|_S)$ appears $\binom{k}{k'}$ times in $(\mathcal{X}, \mathcal{C}')$, once for each subset $S' \subseteq S$ of size $k'$. Hence, the value of $\varphi$ with respect to $(\mathcal{X}, \mathcal{C}')$ is at least $1/R^{k - k'}$ times its value with respect to $(\mathcal{X}, \mathcal{C})$

Recall that the algorithm $A$ gives an assignment of expected value at least $f(R)/R^{k'}$ times the optimum of $(\mathcal{X}, \mathcal{C}')$. Hence, the expected value of $\varphi_A$ is at least $f(R)/R^{k}$ times the optimum of $(\mathcal{X}, \mathcal{C})$.

Next, we will compute the expected value of $\varphi_B$ (with respect to $(\mathcal{X}, \mathcal{C})$). We start by computing the expected value of $\varphi_B$ with respect to a fixed constraint $C = (W, S, P) \in \mathcal{C}$, i.e., $\E_{\varphi_B}[W P(\varphi_B|_S)]$. For each $S' \subseteq S$ of size $k$, we define $D_{S'}$ as the event where, in step~\ref{step:rand-asg}, $\varphi_B(x)$ is assigned to be $\varphi_A(x)$ for all $x \in S'$ and $\varphi_B(x)$ is assigned randomly for all $x \in S - S'$.

Since $D_{S'}$ is disjoint for all $S' \subseteq S$ of size $k$, we have the following inequality.
\begin{align}
\E_{\varphi_B}[W P(\varphi_B|_S)] &\geq \sum_{{S' \subseteq S \atop |S'| = k}} \Pr[D_{S'}] \E_{\varphi_B}[W P(\varphi_B|_S) \mid D_{S'}] \\
(\text{Since } \Pr[D_{S'}] = \alpha^{k - k'}(1 - \alpha)^{k'}) &= \alpha^{k - k'}(1 - \alpha)^{k'} \sum_{{S' \subseteq S \atop |S'| = k}} W \E_{\varphi_B}[P(\varphi_B|_S) \mid D_{S'}]
\end{align}

 Moreover, since every vertex in $S - S'$ is randomly assigned when $D_{S'}$ occurs, $\E[P(\varphi_B|_S) \mid D_{S'}]$ can be view as the average value of $P((\varphi_A|_{S'}) \circ \tau)$ over all $\tau\in [R]^{S - S'}$. Hence, we can derive the following:
\begin{align}
\E_{\varphi_B}[P(\varphi_B|_S) \mid D_{S'}] = \frac{1}{R^{k - k'}} \E_{\varphi_A}\left[\sum_{\tau\in [R]^{S - S'}} P((\varphi_A|_{S'}) \circ \tau)\right].
\end{align}

As a result, we have
\begin{align} \label{inq:one-cons}
\E_{\varphi_B}[W P(\varphi_B|_S)] \geq \frac{\alpha^{k - k'}(1 - \alpha)^{k'}}{R^{k - k'}} \left( \E_{\varphi_A}\left[\sum_{{S' \subseteq S \atop |S'| = k}} \sum_{\tau\in [R]^{S - S'}} W P((\varphi_A|_{S'}) \circ \tau) \right]\right).
\end{align}

By summing (\ref{inq:one-cons}) over all constraints $C \in \mathcal{C}$, we arrive at the following inequality.
\begin{align}
&\E_{\varphi_B}\left[\sum_{C = (W, S, P) \in \mathcal{C}} W P(\varphi_B|_S)\right]  \\
&\geq \frac{\alpha^{k - k'}(1 - \alpha)^{k'}}{R^{k - k'}} \E_{\varphi_A}\left[ \sum_{C = (W, S, P) \in \mathcal{C}} \left( \sum_{{S' \subseteq S \atop |S'| = k}} \sum_{\tau\in [R]^{S - S'}} W P((\varphi_A|_{S'}) \circ \tau) \right)\right] \\
&= \binom{k}{k'}\alpha^{k - k'}(1 - \alpha)^{k'} \E_{\varphi_A}\left[ \sum_{C = (W, S, P) \in \mathcal{C}} \left( \sum_{{S' \subseteq S \atop |S'| = k}} \sum_{\tau\in [R]^{S - S'}} \frac{W}{\binom{k}{k'}R^{k - k'}} P((\varphi_A|_{S'}) \circ \tau) \right)\right] \\
&= \binom{k}{k'}\alpha^{k - k'}(1 - \alpha)^{k'} \E_{\varphi_A}\left[\sum_{C' = (W', S', P') \in \mathcal{C}} W' P'(\varphi_A|_{S'})\right]
\end{align}

The first expression is the expected value of $\varphi_B$ whereas the last is $\binom{k}{k'}\alpha^{k - k'}(1 - \alpha)^{k'}$ times the expected value of $\varphi_A$. Since the expected value of $\varphi_A$ is at least $f(R)/R^{k}$ times the optimum of $(\mathcal{X}, \mathcal{C})$, the expected value of $\varphi_B$ is at least $(\binom{k}{k'}\alpha^{k - k'}(1 - \alpha)^{k'})(f(R)/R^{k})$ times the optimum of $(\mathcal{X}, \mathcal{C})$.

Finally, we substitute $\alpha = \frac{k - k'}{k}$ in to get
\begin{align}
\binom{k}{k'}\alpha^{k - k'}(1 - \alpha)^{k'} &= \binom{k}{k'}\left(\frac{k - k'}{k}\right)^{k - k'}\left(\frac{k'}{k}\right)^{k'}.
\end{align}

Let $l = \min\{k', k - k'\}$. We then have
\begin{align}
\binom{k}{k'}\left(\frac{k - k'}{k}\right)^{k - k'}\left(\frac{k'}{k}\right)^{k'} &= \binom{k}{l}\left(\frac{k - l}{k}\right)^{k - l}\left(\frac{l}{k}\right)^{l} \\
&\geq \left(\frac{k}{l}\right)^{l} \left(\frac{k - l}{k}\right)^{k - l}\left(\frac{l}{k}\right)^{l} \\
&\geq \left(\frac{k - l}{k}\right)^{k} \\
&= \left(\left(1 - {l}/k\right)^{2k/{l}}\right)^{2 l} \\
(\text{From Bernoulli's inequality and from } l \leq k/2)
&\geq 1/2^{2l}.
\end{align}

Hence, $\varphi_B$ has expected value at least $(f(R)/2^{O(l)})/R^{k}$ times the optimum of $(\mathcal{X}, \mathcal{C})$, which completes the proof of this lemma.
\qed
\end{proof}

Finally, Theorem~\ref{thm:approx} is an immediate consequence of applying Lemma~\ref{lem:ext} to the algorithm from~\cite{KKT15} with $k' = 2$ and $f(R) = \Omega(R \log R)$.

\section{$k$-Query Large Alphabet Dictator Test}
\label{sec:dtest}
We remark that the results of Section~\ref{sec:inapprox} also implicitly yield
a $k$-query nonadaptive \emph{Dictator-vs.-Quasirandom} test for
functions over large alphabet.
A Dictator-vs.-Quasirandom test aims to distinguish dictator functions from
functions with small low-degree influences (``quasirandom'').
This concept was essentially introduced in \cite{hastad1996},
and we borrow the ``quasirandom'' terminology from \cite{3bit}
(adapted here for functions over non-binary alphabets).
Specifically, we have the following test:

\begin{theorem}
\label{thm:dtest}
For any function $f: [R]^n \to [R]$, and any $i \in [R]$,
let $f^i: [R]^n \to \{0, 1\}$ denote the indicator function for $f(x) = i$.
For any $k, R \geq 2$, set parameters
$\rho = 1/\sqrt{(k - 1)\log R}$,
$d = 10k\log R$,
and $\delta = 1/(R^{10 + 100 k \log(R)})$.
Then there exists a $k$-query nonadaptive Dictator-vs.Quasirandom test with the following guarantees:
\begin{itemize}
    \item {\bf (Completeness)} If $f$ is a dictator, i.e. $f(x) = x_j$ for some
        coordinate $j \in [n]$, then the test passes with probability at least
        $$\rho^k = 1/((\log R)^{k/2}2^{O(k \log k)})$$

    \item {\bf (Soundness)} If $f$ has $Inf^{\leq d}_j[f^i] \leq \delta$ for all
        coordinates $j \in [n]$ and all projections $i \in [R]$,
        then the test passes with probability at most
       $$2^{O(k)} / R^{k-1}$$
\end{itemize}
\end{theorem}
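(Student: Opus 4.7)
The plan is to recycle the analysis of the PCP inner verifier from Section~\ref{subsec:red-ug}: the test described in the theorem is precisely that inner verifier, applied to a single Long Code $f$ rather than to a family $\{h_w\}_{w \in W}$ indexed by a Unique Games instance. Both the completeness and the soundness bounds will then drop out of the computations already carried out there, with Lemma~\ref{mainlemma} (Main Lemma) doing the bulk of the work on the soundness side.

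For completeness, I would take $f(x) = x_j$ to be the $j$-th dictator and observe that $f(\i{x}{\ell}) = (\i{x}{\ell})_j$ for every $\ell \in [k]$, so the test accepts as soon as $(\i{x}{\ell})_j = z_j$ holds simultaneously for all $\ell$. These $k$ events are independent conditional on $z$ and each occurs with probability $\rho$, so the total acceptance probability is at least $\rho^k$, matching the promised completeness $1/((\log R)^{k/2}2^{O(k \log k)})$ by our choice $\rho = 1/\sqrt{(k-1)\log R}$.

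For soundness, I would assume $f$ is balanced in the sense that $\E[f^i] = 1/R$ for every $i \in [R]$; this is the standard Long Code convention and can be enforced by the usual folding trick used in the PCP reduction. Expanding the acceptance probability via the conditional independence of the noisy copies $\i{x}{1}, \dots, \i{x}{k}$ given $z$, exactly as in Section~\ref{subsec:red-ug}, gives
\begin{equation}
\Pr[\text{accept}] \;=\; \sum_{i \in [R]} \E_z\!\left[\left(T_\rho f^i(z)\right)^k\right].
\end{equation}
Each function $f^i$ takes values in $[0,1]$, has expectation $1/R$, and by hypothesis satisfies $Inf_j^{\leq d}[f^i] \leq \delta$ for every coordinate $j$, so it meets the assumptions of Lemma~\ref{mainlemma} with our choice of $\rho, d, \delta$. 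Applying the Main Lemma term-by-term bounds each inner expectation by $2^{O(k)}/R^k$, and summing over the $R$ values of $i$ yields the claimed soundness $2^{O(k)}/R^{k-1}$.

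There is essentially no new obstacle to overcome: the substantive step is the invocation of Lemma~\ref{mainlemma}, whose proof via the Invariance Principle (Lemma~\ref{ourinv}) and boolean Hypercontractivity has already been carried out in Subsection~\ref{subsec:noisestability}. The only mildly delicate point is the balance condition on $f$, which must either be added to the theorem statement or treated as part of the standard Long Code test setup; once it is in place, the rest of the argument is a direct specialization of the soundness analysis of the PCP, stripped of the outer Unique Games structure.
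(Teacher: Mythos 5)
Your proposal captures the right high-level structure (reduce to $\sum_i \E_z[(T_\rho g^i(z))^k]$ and invoke Lemma~\ref{mainlemma}), and your completeness argument is correct and identical to the paper's. But the soundness half has a genuine gap, and it is exactly the one the paper flags at the end of Section~\ref{sec:dtest} as the content of the appendix proof: Theorem~\ref{thm:dtest} is stated for \emph{general} $f: [R]^n \to [R]$, with no balance hypothesis, and you cannot simply ``assume $f$ is balanced.''

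Concretely, if the test is the naive one (pick $z$, pick $\rho$-correlated $\i{x}{1},\dots,\i{x}{k}$, accept iff $f(\i{x}{1}) = \cdots = f(\i{x}{k})$), then a constant function $f \equiv 1$ has every $Inf_j^{\leq d}[f^i] = 0$ yet passes with probability $1$, so soundness fails outright. Folding is not just a convention to ``enforce'' balance — it must be built into the test itself, and the paper does so: it picks independent random shifts $c_1,\dots,c_k \in [R]$ and accepts iff $f_{c_1}(\i{x}{1}) = \cdots = f_{c_k}(\i{x}{k})$ where $f_c(x) := f(x + (c,\dots,c)) - c$. The acceptance probability is then $\sum_i \E_z[(T_\rho g^i(z))^k]$ with $g^i(x) := \E_c[f^i_c(x)]$, and the work that remains is to verify that $g^i$ satisfies the hypotheses of Lemma~\ref{mainlemma}: that $\E[g^i] = 1/R$ (easy), and — the substantive step your proposal omits — that $Inf_j^{\leq d}[g^i] \leq \delta$ follows from the hypothesis $Inf_j^{\leq d}[f^i] \leq \delta$ for all $i$. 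The paper proves this by writing $\hat g^i(s) = \E_c[\hat f^i_c(s)]$, applying Jensen to get $Inf_j^{\leq d}[g^i] \leq \E_c[Inf_j^{\leq d}[f^i_c]]$, and then showing $Inf_j^{\leq d}[f^i_c] = Inf_j^{\leq d}[f^{i+c}]$ because shifting inputs and outputs by the same constant permutes coordinates but preserves per-coordinate variance. Without this chain the theorem as stated is not proved, and this is precisely the part the paper singles out as non-trivial, so it cannot be dismissed as a standard convention.
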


Notice that if we assume $f$ is balanced, then this theorem is immediately implied
by the techniques of Section~\ref{sec:inapprox}.
However, to extend this to general functions via ``folding'',
we must technically show that the operation of folding keeps
low-influence functions as low-influence.
The full proof can be found in Appendix~\ref{sec:dtest-proof}.

\section{Conclusions and Open Questions} \label{sec:lim}

We conclude by posting interesting open questions regarding the approximability of {\sc Max $k$-CSP}$_R$ and providing our opinions on each question. First, as stated earlier, even with our results, current inapproximability results do not match the best known approximation ratio achievable in polynomial time when $3 \leq k < R$. Hence, it is intriguing to ask what the right ratio that {\sc Max $k$-CSP}$_R$ becomes NP-hard to approximate is. Since our hardness factor $2^{O(k \log k)}(\log R)^{k/2}/R^{k-1}$ does not match Chan's hardness factor $O(k/R^{k-2})$ when $k = R$, it is likely that there is a $k$ between 3 and $R - 1$ such that a drastic change in the hardness factor, and technique that yields that factor, occurs.

Moreover, since our PCP has completeness of $1/(2^{O(k)}(\log R)^{k/2})$, even if one cannot improve on the inapproximability factor, it is still interesting if one can come up with a hardness result with almost perfect completeness. In fact, even for $k = 2$, there is no known hardness of approximation of factor better than $O(\log R/\sqrt{R})$ with near perfect completeness whereas the best UGC-hardness known is $O(\log R/R)$.

It is also interesting to try to relax assumptions for other known inapproximability results from UGC to the One-Sided UGC. Since the One-Sided UGC is implied by $d$-to-1 Games Conjecture, doing so will imply inapproximability results based on the $d$-to-1 Games Conjecture. Moreover, without going into too much detail, we remark that most attempts to refute the UGC and the $d$-to-1 Conjecture need the value of the game to be high~\cite{ABS15, CMM06, ChMM06, GT06, Khot02, Kol11, T08}. Hence, these algorithms are not candidates to refute the One-Sided UGC. In addition, Arora, Barak and Steurer's~\cite{ABS15} subexponential time algorithm for unique games suggest that unique games have {\em intermediate complexity}, meaning that, even if the UGC is true, the UGC-hardness would not imply exponential time lower bounds. 
On the other hand, to the best of the authors' knowledge, the ABS algorithm does not run in subexponential time when the completeness is small. Hence, the One-Sided UGC may require exponential time to solve, which could give similar running time lower bounds for the resulting hardness of approximation results. Finally, there are evidences suggesting that relaxing completeness or soundness conditions of a conjecture can make it easier; the most relevant such result is that from Feige and Reichman who proved that, if one only cares about the approximation ratio and not completeness and soundness, then unique game is hard to approximate to within factor $\varepsilon$ for any $\varepsilon > 0$ ~\cite{FR04}.

\bibliographystyle{alpha}
\bibliography{k-csp}

\appendix

\section{Proofs of Preliminary Results}

For completeness, we prove some of the preliminary results,
whose formal proofs were not found in the literature by the authors.

\subsection{Mollification Lemma} \label{sec:apx-mollification}

Below is the proof of the Mollification Lemma. We remark that, while
its main idea is explained in~\cite{OD14}, the full proof is not shown there.
Hence, we provide the proof here for completeness.

\begin{proof}(of Lemma~\ref{lem:mollify})
Let $p: \R \to \R$ be a $\mathcal{C}^4$ function supported only on $[-1, +1]$,
such that $p(y)$ forms a probability distribution.
(For example, an appropriately normalized version of
$e^{-1/(1+y^2)}$ for $|y| \leq 1$).
Define $p_\lambda(y)$ to be re-scaled to have support $[-\lambda, +\lambda]$
for some $\lambda > 0$:
\beq
p_\lambda(y) := (1/\lambda) p(y / \lambda).
\eeq
Let $Y_\lambda$ be a random variable with distribution $p_\lambda(y)$,
supported on $[-\lambda, +\lambda]$.
We will set $\lambda = \zeta / c$.

Now, define
\beq
\t \psi := \E_{Y_\lambda}[\psi(x + Y_\lambda)].
\eeq

This is pointwise close to $\psi$, since $\psi$ is $c$-Lipschitz:
\beq
    |\t \psi(x)  - \psi(x)| = |\E_{Y_\lambda}[\psi(x + Y_\lambda)- \psi(x)]|
\leq \E_{Y_\lambda}[|\psi(x + Y_\lambda) - \psi(x)|]
   \leq \E_{Y_\lambda}[c|Y_\lambda|]
    \leq c \lambda = \zeta.
\eeq

Further, $\t \psi$ is $\mathcal{C}^3$, because $\t \psi(x)$ can be written as a
convolution:
\beq
\t\psi(x) = (\psi \ast p_\lambda)(x)
\implies
\t\psi''' = (\psi \ast p_\lambda)''' = (\psi \ast p_\lambda''').
\eeq
To see that $\t \psi'''$ is bounded,
for a fixed $x \in \R$, define the constant $z := \psi(x)$. Then,
\begin{align}
|\t\psi'''(x)|
&= |(\psi \ast p_\lambda''')(x)|\\
(z \text{ is constant, so } z'=0)\quad
&= |(\psi \ast p_\lambda''' - z' \ast p_\lambda'')(x)|\\
&= |(\psi \ast p_\lambda''' - z \ast p_\lambda''')(x)|\\
&= |((\psi - z) \ast p_\lambda''')(x)|\\
&= \left|\int_{-\infty}^{+\infty} p_\lambda'''(y) (\psi(x - y)  - z)dy \right|\\
&= \left|\int_{-\infty}^{+\infty} p_\lambda'''(y) (\psi(x - y)  - \psi(x))dy \right|\\
&\leq \int_{-\lambda}^{+\lambda} |p_\lambda'''(y)| |\psi(x - y) - \psi(x)| dy\\
(c\text{-Lipschitz})\quad
&\leq  ||p_\lambda'''||_\infty \int_{-\lambda}^{+\lambda}|cy| dy\\
&=  ||p_\lambda'''||_\infty c\lambda^2.
\end{align}

Define the universal constant $\t C := ||p'''||_\infty$.
We have
\beq
p_\lambda'''(y) = (1/\lambda^4) p'''(y/\lambda)
\implies
||p_\lambda'''||_\infty \leq (1/\lambda^4) \t C.
\eeq

With our choice of $\lambda = \zeta / c$, this yields $|\t\psi'''(x)| \leq \t C c^3 / \zeta^2$, which completes the proof of Lemma~\ref{lem:mollify}.
\qed
\end{proof}

\subsection{Proof of Lemma~\ref{ourinv}} \label{sec:app-ourinv}

Below we show the proof of Lemma~\ref{ourinv}.

\proof
First, we ``mollify'' the function $\psi$
to construct a $\mathcal{C}^3$ function $\t \psi$,
by applying Lemma~\ref{lem:mollify} for $\zeta = 1/R^k$.
Notice that both choices of $\psi$ are $k$-Lipschitz.
Therefore the Mollification Lemma guarantees that
$|\t \psi'''(x)| \leq \t C k^3 R^{2k}$
for some universal constant $\t C$.

Since $\t \psi$ is pointwise close to $\psi$, with deviation at most $1/R^k$, we have
\begin{equation}
\left| \E_{y \in \{\pm 1\}^{nR}}[\psi(\Fk(y))]
- \E_{x \in [R]^{n}}[\psi(\fk(x))] \right|
\leq
\left| \E_{y \in \{\pm 1\}^{nR}}[\t \psi(\Fk(y))]
- \E_{x \in [R]^{n}}[\t \psi(\fk(x))] \right|
+
O(1/R^k).
\end{equation}

Applying the General Invariance Principle (Theorem~\ref{thm:maininv}) with the function
$\t \psi$, we have
\begin{equation}
\label{eqn:use_inv}
\left| \E_{y \in \{\pm 1\}^{nR}}[\t \psi(\Fk(y))]
- \E_{x \in [R]^{n}}[\t \psi(\fk(x))] \right|
\leq \t C k^3R^{2k} 10^d R^{d/2} \sqrt{\delta}.
\end{equation}
By our choice of parameters $d, \delta$, this is $O(1/R^k)$.
\qed

\section{$d$-to-1 Games Conjecture implies One-Sided Unique Games Conjecture} \label{sec:dto1osugc}

In this section, we prove that if $d$-to-1 Games Conjecture is true, then so is One-Sided Unique Games Conjecture.

\begin{lemma}
For every $d \in \mathbb{N}$, $d$-to-1 Games Conjecture implies One-Sided UGC.
\end{lemma}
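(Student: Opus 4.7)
The plan is a simple alphabet-unfolding reduction: given any $d$-to-1 game, I will construct a unique game on the same graph such that the new game has value at least $1/d$ whenever the original was fully satisfiable, and value at most $\gamma$ whenever the original had value at most $\gamma$. This yields exactly One-Sided UGC with $\zeta = 1/d$.

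First, I construct the reduction. Starting from an instance $(V, W, E, N, \{\pi_e\}_{e\in E})$ of $d$-to-1, I identify the right-side alphabet $[N]$ with $[N/d] \times [d]$, thinking of it as $d$ ``copies'' of each of the $N/d$ original right labels. For each edge $e$, since $|\pi_e^{-1}(\sigma)| = d$ for every $\sigma \in [N/d]$, I fix an arbitrary enumeration $\pi_e^{-1}(\sigma) = \{a_1^{e,\sigma}, \ldots, a_d^{e,\sigma}\}$ and define $\pi'_e \colon [N] \to [N/d] \times [d]$ by $\pi'_e(a_j^{e,\sigma}) := (\sigma, j)$. This $\pi'_e$ is a bijection by construction, so $(V, W, E, N, \{\pi'_e\}_{e \in E})$ is a valid unique-game instance.

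For completeness, suppose the original instance has value $1$, witnessed by $\varphi$. I choose a randomized unique-game assignment $\varphi'$ with $\varphi'|_V = \varphi|_V$ and, independently for each $w \in W$, $\varphi'(w) = (\varphi(w), j_w)$ with $j_w$ uniform in $[d]$. For each edge $e = (v, w)$, since $\pi_e(\varphi(v)) = \varphi(w)$ there is a unique index $j^*$ with $\varphi(v) = a_{j^*}^{e, \varphi(w)}$, and the edge is satisfied in the new game iff $j_w = j^*$, which happens with probability $1/d$. Averaging yields a deterministic assignment of value $\geq 1/d$. For soundness, given any assignment $\varphi'$ satisfying a $\gamma'$-fraction of edges in the new game, I project back by keeping $V$-labels and taking the first coordinate on $W$: if $\pi'_e(\varphi(v)) = \varphi'(w) = (\sigma, j)$, then by construction $\pi_e(\varphi(v)) = \sigma$, which equals the projected label of $w$, so the edge is satisfied in the $d$-to-1 instance. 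Hence the original value is at least $\gamma'$; contrapositively, original value $\leq \gamma$ implies new value $\leq \gamma$.

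There is no real conceptual obstacle here; the entire argument is bookkeeping about how $\pi'_e$ agrees with $\pi_e$ on first coordinates, which is immediate from the definition. The only point worth handling with some care is the completeness direction, where the natural assignment is randomized, so one must explicitly invoke averaging to extract a deterministic witness of value at least $1/d$. Combining the two directions gives the reduction from $d$-to-1 Games Conjecture to One-Sided UGC with $\zeta = 1/d$.
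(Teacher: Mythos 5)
Your proof is correct and takes essentially the same approach as the paper's: you split each right-side label into $d$ copies, redefine each $d$-to-1 constraint as a bijection consistent with the original on first coordinates, and argue completeness by averaging over the $d$ copies and soundness by projecting back. The only cosmetic difference is that you encode the new alphabet as $[N/d]\times[d]$ rather than $d(\theta-1)+i$, and you phrase the completeness averaging via an explicitly randomized assignment rather than a per-vertex max-exceeds-average argument, but these are the same idea.
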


\begin{proof}
Suppose that $d$-to-1 Games Conjecture is true for some $d \in \mathbb{N}$. We will prove One-Sided UGC; more specifically, $\zeta$ in the One-Sided UGC is $1/d$.
The reduction from a $d$-to-1 game $(V, W, E, N, \{\pi_e\}_{e \in E})$ to a unique game $(V', W', E', N', \{\pi'_e\}_{e \in E})$ can be described as follows:
\begin{itemize}
\item Let $V' = V, W' = W, E' = E$, and $N' = N$
\item We define $\pi'_e$ as follows. For each $\theta \in [N/d]$, let the elements of $\pi^{-1}_e(\theta)$ be $\sigma_1, \sigma_2, \dots, \sigma_d \in [N]$. We then define $\pi'_e(\sigma_i) = d(\theta - 1) + i$.
\end{itemize}

Now, we will prove the soundness and completeness of this reduction.

{\bf (Completeness)} Suppose that the $d$-to-1 game is satisfiable. Let $\varphi: V \cup W \rightarrow [N]$ be the assignment that satisfies every constraint in the $d$-to-1 game. We define $\varphi': V' \cup W' \rightarrow [N']$ by first assign $\varphi'(v) = \varphi(v)$ for every $v \in V$. Then, for each $w \in W$, pick $\varphi'(w)$ to be an assignment that satisfies as many edges touching $w$ in the unique game as possible, i.e., for a fixed $w$, $\varphi'(w)$ is select to maximize $|\{v \in N(w) \mid \pi'_{(v, w)}(\varphi(v)) = \varphi'(w)\}|$ where $N(w)$ is the set of neighbors of $w$. From how $\varphi'(w)$ is picked, we have
\begin{align}
|\{v \in N(w) \mid \pi'_{(v, w)}(\varphi(v)) = \varphi'(w)\}| &\geq \frac{1}{d} \sum_{i=1}^{d} |\{v \in N(w) \mid \pi'_{(v, w)}(\varphi(v)) = d(\varphi(w) - 1) + i\}|.
\end{align}

Let $1[{\pi'_{(v, w)}(\varphi(v)) = d(\varphi(w) - 1) + i}]$ be the indicating variable whether ${\pi'_{(v, w)}(\varphi(v)) = d(\varphi(w) - 1) + i}$, we can rewrite the right hand side as follows:

\begin{align}
&\frac{1}{d} \sum_{i=1}^{d} |\{v \in N(w) \mid \pi'_{(v, w)}(\varphi(v)) = d(\varphi(w) - 1) + i\}| \\
&= \frac{1}{d} \sum_{i=1}^{d} \sum_{v \in N(w)} 1[{\pi'_{(v, w)}(\varphi(v)) = d(\varphi(w) - 1) + i}] \\
&= \frac{1}{d} \sum_{v \in N(w)} \sum_{i=1}^{d} 1[{\pi'_{(v, w)}(\varphi(v)) = d(\varphi(w) - 1) + i}].
\end{align}

From how $\pi'_{(v, w)}$ is defined and since $\pi_{(v, w)}(\varphi(v)) = \varphi(w)$, there exists $i \in [d]$ such that $\pi'_{(v, w)}(\varphi(v)) = d(\varphi(w) - 1) + i$. As a result, we have
\begin{align}
\frac{1}{d} \sum_{v \in N(w)} \sum_{i=1}^{d} 1[{\pi'_{(v, w)}(\varphi(v)) = d(\varphi(w) - 1) + i}] \geq \frac{1}{d} \sum_{v \in N(w)} 1 = \frac{|N(w)|}{d}.
\end{align}

In other words, at least $1/d$ fraction of edges touching $w$ is satisfied in the unique game for every $w \in W$. Hence, $\varphi'$ has value at least $1/d$, which means that the unique game also has value at least $1/d$.

{\bf (Soundness)} Suppose that the value of the $d$-to-1 game is at most $\gamma$. For any assignment $\varphi': V' \cup W' \rightarrow [N']$ to the unique game, we can define an assignment $\varphi: V \cup W \rightarrow [N]$ by
\begin{align}
\varphi(u) =
\begin{cases}
\varphi'(u) & \text{if } u \in V, \\
\lfloor (\varphi'(u) - 1)/d \rfloor + 1 &\text{ if } u \in W.
\end{cases}
\end{align}

From how $\pi'_e$ is defined, it is easy to see that, if $\pi'_e(\varphi'(v)) = \varphi'(w)$, then $\pi_e(\varphi(v)) = \varphi(w)$. In other words, the value of $\varphi'$ with respect to the unique game is no more than the value of $\varphi$ with respect to the $d$-to-1 game. As a result, the value of the unique game is at most $\epsilon$.

As a result, if it is NP-hard to distinguish a satisfiable $d$-to-1 game from one with value at most $\gamma$, then it is also NP-hard to distinguish a unique game of value at least $\zeta = 1/d$ from that with value at most $\gamma$, which concludes the proof of this lemma.
\qed
\end{proof}

\section{Proof of Dictator Test}
\label{sec:dtest-proof}

Here we prove our result for the Dictator-vs.-Quasirandom test (Theorem~\ref{thm:dtest}).

\proof (of Theorem~\ref{thm:dtest})
For $c \in [R]$, define the function
\beq
f_c(x_1, x_2, \dots, x_n) := f(x_1 + c, x_2 + c, \dots, x_n + c) - c.
\eeq
Note that $\pm c$ is performed modulo $R$.

The test works as follows:
Pick $z \in [R]^n$ uniformly at random,
and let $\i{x}{1}, \i{x}{2}, \dots, \i{x}{k}$ be independent $\rho$-correlated noisy copies of
$z$.
Then, pick $c_1, c_2, \dots, c_k$ independently uniformly at random, where each $c_i \in [R]$.
Accept iff
\beq
f_{c_1}(\i{x}{1}) = f_{c_2}(\i{x}{2}) = \dots = f_{c_k}(\i{x}{k}).
\eeq

For completeness, notice that if $f$ is a dictator, then $f_c = f$ for all $c \in [R]$.
Say $f$ is a dictator on the $j$-th coordinate: $f(x) = x_j$.
Then the test clearly accepts with probability at least $\rho^k$ (if none of the
coordinates $j$ were perturbed in all the noisy copies $\i{x}{i} \pcorr z$).

For soundness: For any $i \in [R]$, let $f^i: [R]^n \to \{0, 1\}$
denote the indicator function for $f(x) = i$, and similarly for $f_c^i: [R]^n
\to \{0, 1\}$.
Notice that
\beq
f^i_c(x) = f^{i+c}(x + (c, c, \dots, c))
\eeq

Then, write the acceptance probability as
\bal
\Pr[accept] &=
\Pr_{c_i, z, \i{x}{j} \pcorr z}[f_{c_1}(\i{x}{1}) = f_{c_2}(\i{x}{2}) = \dots = f_{c_k}(\i{x}{k})]\\
&=
\sum_{i \in [R]}
\Pr_{c_i, z, \i{x}{j} \pcorr z}[i = f_{c_1}(\i{x}{1}) = f_{c_2}(\i{x}{2}) = \dots = f_{c_k}(\i{x}{k})]\\
&=
\sum_{i \in [R]}
\E_{c_i, z, \i{x}{j} \pcorr z}[f_{c_1}^i(\i{x}{1}) f_{c_2}^i(\i{x}{2}) \dots f_{c_k}^i(\i{x}{k})]\\
(\text{Independence of } c_i) \quad &=
\sum_{i \in [R]}
\E_{z, \i{x}{j}\pcorr z}[\E_{c_1}[f_{c_1}^i(\i{x}{1})] \E_{c_2}[f_{c_2}^i(\i{x}{2})] \dots
\E_{c_2}[f_{c_k}^i(\i{x}{k})]].\\
\eal

If we define the function $g^i: [R]^n \to [0, 1]$ as
\beq
g^i(x) := \E_c[f^i_c(x)].
\eeq
Then this acceptance probability is
\bal
\Pr[accept] &=
\sum_{i \in [R]}
\E_{z, x_i \pcorr z}[g^i(\i{x}{1})g^i(\i{x}{2})\dots g^i(\i{x}{k})]\\
&=
\sum_{i \in [R]}
\E_{z}[(T_\rho g^i(z))^k].
\label{eqn:dtest-hyper}
\eal
Notice that $\E_x[g^i(x)] = 1/R$, because
\bal
\E_x[g^i(x)] = \E_{x, c}[f^i_c(x)]
&= \E_{x, c}[f^{i+c}(x+ (c, c, \dots, c))]\\
\text{($c, x$ same joint distribution as $i+c, x+c$)} \quad
&= \E_{x, c}[f^{c}(x)]\\
&= \E_x[\E_c[f^c(x)]]
= \E_x[1/R] = 1/R.
\eal
Thus, if the function $g^i$ has small low-degree influences, then
Lemma~\ref{mainlemma} (Main Lemma) applied to $g^i$
in line~\eqref{eqn:dtest-hyper}
directly implies that this acceptance
probability is $2^{O(k)} / R^{k-1}$.
We will now formally show that the influences of the
``expected folded function'' $g^i$ are bounded by the influences of the original $f^i$.

First, the Fourier coefficients of $g^i$ are
\bal
\hat g^i(s) &= \E_c[\hat f^i_c(s)].
\eal
Thus the low-degree influences of $g^i$ are bounded as
\bal
Inf_j^{\leq d}[g^i]
&=
\sum_{s \in [R]^n \atop s(j) \neq 1, |s| \leq d}
\hat g^i(s)^2 \\
&=
\sum_{s \in [R]^n \atop s(j) \neq 1, |s| \leq d}
\E_c[\hat f^i_c(s)]^2\\
&\leq
\sum_{s \in [R]^n \atop s(j) \neq 1, |s| \leq d}
\E_c[\hat f^i_c(s)^2]\\
&=
\E_c[\sum_{s \in [R]^n \atop s(j) \neq 1, |s| \leq d}
\hat f^i_c(s)^2]\\
&=
\E_c[Inf_j^{\leq d}[f_c^i]].
\label{eqn:avginfc}
\eal

Finally, we must relate the influences of $f_c^i$ to the influences of $f^i$.
For a fixed $c \in [R]$, we have
\bal
Inf_j^{\leq d}[f_c^i]
&=
Inf_j[(f_c^i)^{\leq d}]\\
&=
\E_{x \in [R]^n}[Var_{x_j \in [R]}[(f_c^i)^{\leq d}]] \\
&=
\E_{x \in [R]^n}[Var_{x_j \in [R]}[(f^{i+c})^{\leq d}(x_1 + c, x_2 + c, \dots, x_n +c)]] \\
&=
\E_{x \in [R]^n}[Var_{x_j \in [R]}[(f^{i+c})^{\leq d}(x_1, x_2, \dots, x_n)]] \\
&=
Inf_j[(f^{i+c})^{\leq d}]\\
&=
Inf_j^{\leq d}[f^{i+c}].
\label{eqn:infic}
\eal
Therefore, if $Inf^{\leq d}_j[f^i] \leq \delta$ for all coordinates $j \in [n]$
and all projections $i \in [R]$ (as we assume for soundness), then
from~\eqref{eqn:avginfc} and~\eqref{eqn:infic} we have
\beq
Inf_j^{\leq d}[g^i] \leq
\E_c[Inf_j^{\leq d}[f_c^i]]
=
\E_c[ Inf_j^{\leq d}[f^{i+c}]]
\leq \delta.
\eeq
Thus the function $g^i$ has small low-degree influences as well.

So we can complete the proof, continuing from line~\eqref{eqn:dtest-hyper} and
applying our Main Lemma to $g^i$:
\bal
\Pr[accept] &=
\sum_{i \in [R]}
\E_{z}[(T_\rho g^i(z))^k]\\
\text{(Lemma~\ref{mainlemma})}\quad &\leq
\sum_{i \in [R]}
2^{O(k)}/R^k\\
&=
2^{O(k)}/R^{k-1}.
\eal
\qed

\end{document}